
\documentclass[journal,10pt,twocolumn]{IEEEtran}

\usepackage{cite}

\ifCLASSINFOpdf
  \usepackage[pdftex]{graphicx}
  \graphicspath{{figures/}}
  \DeclareGraphicsExtensions{.pdf}
\else
  \usepackage[dvips]{graphicx}
  \graphicspath{{figures/}}
  \DeclareGraphicsExtensions{.eps}
\fi

\usepackage[cmex10]{amsmath}
\usepackage{amssymb}
\usepackage{amsthm}

\newtheorem{theorem}{Theorem}
\newtheorem{lemma}[theorem]{Lemma}

\usepackage{algorithmic}
\usepackage{algorithm}

\usepackage{verbatim}

\usepackage{url}
\usepackage{rotating}

\usepackage{color}

\newcommand {\MEL} {{\mathrm{M}}}
\newcommand {\R} {{\mathbb{R}}}
\newcommand {\Z} {{\mathbb{Z}}}
\newcommand {\E} {{\mathbb{E}}}
\newcommand {\om} {{\omega}}
\newcommand {\la} {{\lambda}}
\newcommand {\La} {{\Lambda}}

\newcommand {\lau} {{\lambda_1}}
\newcommand {\lad} {{\lambda_2}}

\newcommand {\quef} {{q}}
\newcommand {\quefu} {{q_1}}

\newcommand {\lat} {{\lambda_3}}

\newcommand {\bV} {{\bf V}}

\newcommand {\mm} {{l}}
\newcommand {\tS} {{\widetilde{S}}}
\newcommand {\Ufreq} {{U^{\mathrm{fr}}}}
\newcommand {\Sfreq} {{S^{\mathrm{fr}}}}
\newcommand {\freq} {{{\mathrm{fr}}}}

\begin{document}

\title{Deep Scattering Spectrum}

\author{Joakim~And\'en,~\IEEEmembership{Student Member,~IEEE,}
        St\'ephane~Mallat,~\IEEEmembership{Fellow,~IEEE}
\thanks{This work is supported by the ANR 10-BLAN-0126 and ERC InvariantClass 320959 grants.}
}

\markboth{Transactions on Signal Processing}%
{And\'en \& Mallat: Deep Scattering Spectrum}

\maketitle

\begin{abstract}
A scattering transform defines a locally translation invariant representation
which is stable to time-warping deformations. It extends MFCC representations
by computing modulation spectrum coefficients of multiple orders, through
cascades of wavelet convolutions and modulus operators. Second-order
scattering coefficients characterize transient phenomena such as attacks and
amplitude modulation. A frequency transposition invariant representation is
obtained by applying a scattering transform along log-frequency.
State-the-of-art classification results are obtained for musical genre and
phone classification on GTZAN and TIMIT databases, respectively.
\end{abstract}

\begin{IEEEkeywords}
Audio classification, deep neural networks, MFCC, modulation spectrum, wavelets.
\end{IEEEkeywords}

\IEEEpeerreviewmaketitle

\section{Introduction}

A major difficulty of audio representations for classification is the
multiplicity of information at different time scales: pitch and timbre at the
scale of milliseconds, the rhythm of speech and music at the scale of seconds,
and the music progression over minutes and hours. Mel-frequency cepstral
coefficients (MFCCs) are efficient local descriptors at time scales up to
$25~\mathrm{ms}$. Capturing larger structures up to $500~\mathrm{ms}$ is however
necessary in most applications. This paper studies the construction of stable,
invariant signal representations over such larger time scales. We concentrate
on audio applications, but introduce a generic scattering representation for
classification, which applies to many signal modalities beyond audio
\cite{Abry}.

Spectrograms compute locally time-shift invariant descriptors over durations limited
by a window. However, Section \ref{sec:mfsc} shows that high-frequency
spectrogram coefficients are not stable to variability due to time-warping
deformations, which occur in most signals, particularly in audio.
Stability means that small signal deformations produce small modifications of
the representation, measured with a Euclidean norm. This is particularly
important for classification.
Mel-frequency spectrograms are obtained by averaging spectrogram values over
mel-frequency bands. It improves stability to time warping, but it also removes
information. Over time intervals larger than $25~\mathrm{ms}$, the information
loss becomes too important, which is why mel-frequency spectrograms and MFCCs,
are limited to such short time
intervals. Modulation spectrum decompositions
\cite{hermansky,vinton2001scalable,mcdermott,ramona:2011,slaney,patterson,lee-shih,ellis-mcdermott,thompson2003non}
characterize the temporal evolution of mel-frequency spectrograms over larger
time scales, with autocorrelation or Fourier coefficients. However, this
modulation spectrum also suffers from instability to time-warping deformation,
which impedes classification performance.

Section \ref{sec:scattering} shows that the information lost by mel-frequency
spectrograms can be recovered with multiple layers of wavelet coefficients. In
addition to being locally invariant to time-shifts, this representation is
also stable to time-warping deformation. Known as a scattering transform
\cite{stephane}, it is computed through a cascade of wavelet transforms and
modulus non-linearities. The computational structure is similar to a
convolutional deep neural network
\cite{lecun,lee,hinton2012deep,deng2013deep,graves2013speech,humphrey2012learning,hamel2010learning,battenberg2012analyzing},
but involves no learning. It outputs time-averaged coefficients, providing
informative signal invariants over potentially large time scales.

A scattering transform has striking similarities with physiological models of
the cochlea and of the auditory pathway \cite{dau,shamma}, also used for audio
processing \cite{mesgarani:2006}. Its
energy conservation and other mathematical properties are reviewed in Section
\ref{sec:scattprop}. An approximate inverse scattering transform is introduced
in Section \ref{sec:inverse}, with numerical examples. Section \ref{sec:model}
relates the amplitude of scattering coefficients to audio signal properties.
These coefficients provide accurate measurements of frequency intervals
between harmonics and also characterize the amplitude modulation of
voiced and unvoiced sounds. The logarithm of scattering coefficients linearly
separates audio components related to pitch, formant and timbre.

\begin{figure*}[t!]
\centering
\setlength{\unitlength}{1in}
\begin{picture}(7.0,0.95)
\put(0.0,0.13){\includegraphics[width=7in]{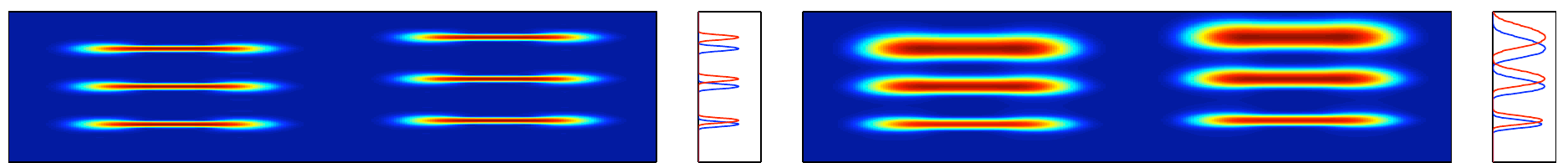}}
\put(0.0,0.86){\footnotesize$\omega$}
\put(3.55,0.86){\footnotesize$\omega$}
\put(2.98,0.1){\footnotesize$t$}
\put(6.52,0.1){\footnotesize$t$}
\put(1.45,0.0){\footnotesize{}(a)}
\put(5,0.0){\footnotesize{}(b)}
\color{blue}
\put(0.80,0.03){\line(0,1){0.85}}
\put(0.85,0.01){\footnotesize$t_0$}
\color{red}
\put(2.25,0.03){\line(0,1){0.85}}
\put(2.30,0.01){\footnotesize$t_1$}
\color{blue}
\put(4.33,0.03){\line(0,1){0.85}}
\put(4.38,0.01){\footnotesize$t_0$}
\color{red}
\put(5.78,0.03){\line(0,1){0.85}}
\put(5.83,0.01){\footnotesize$t_1$}
\end{picture}
\caption{\label{fig:dilation-spec}{ (a) Spectrogram $\log |\widehat{x}(t,\omega)|$ for a harmonic signal $x(t)$ (centered in $t_0$) followed by
$\log |\widehat{x}_\tau (t,\omega)|$ for $x_\tau(t) = x((1-\epsilon)t)$ (centered
in $t_1$), as a function of $t$ and $\omega$. The right graph plots $\log|\widehat{x}(t_0, \omega)|$ (blue) and $\log |\widehat{x}_\tau (t_1, \omega)|$ (red) as a
function of $\omega$. Their partials do not overlap at high frequencies. (b)
Mel-frequency spectrogram $\log M x(t,\omega)$ followed by $\log M x_\tau
(t,\omega)$. The right graph plots $\log M x(t_0, \omega)$ (blue) and $\log M
x_\tau (t_1, \omega)$ (red) as a function of $\omega$. With a mel-scale
frequency averaging, the partials of $x$ and $x_\tau$ overlap at all
frequencies. }}
\end{figure*}

Frequency transpositions form another important source of audio variability,
which should be kept or removed depending upon the classification task.
For example, speaker-independent phone classification requires some frequency
transposition invariance, while frequency localization is necessary for speaker identification. Section \ref{sec:cls} shows that cascading a
scattering transform along log-frequency yields a transposition invariant
representation which is stable to frequency deformation.

Scattering representations have proved useful for image classification
\cite{joan:2012,laurent:2013}, where spatial translation invariance is
crucial. In audio, the analogous time-shift invariance is also important, but
scattering transforms are computed with very different wavelets. They have a
better frequency resolution, which is adapted to audio frequency structures.
Section \ref{sec:classif} explains how to adapt and optimize the
frequency invariance for each signal class at the supervised learning stage.
A time and frequency scattering representation is used for musical genre
classification over the GTZAN database, and for phone segment classification
over the TIMIT corpus. State-of-the-art results are obtained with a Gaussian
kernel SVM applied to scattering feature vectors. All figures and results are
reproducible using a MATLAB software package, available at {\it
http://www.di.ens.fr/data/scattering/}.

\section{Mel-frequency Spectrum}
\label{sec:mfsc}

Section \ref{Fourernsinfs} shows that high-frequency spectrogram coefficients
are not stable to time-warping deformation. The mel-frequency spectrogram
stabilizes these coefficients by averaging them along frequency, but loses
information. To analyze this information loss, Section \ref{melsdfns} relates
the mel-frequency spectrogram to the amplitude output of a filter bank which
computes a wavelet transform.

\subsection{Fourier Invariance and Deformation Instability}
\label{Fourernsinfs}

Let $\widehat{x}(\om) = \int x(u)e^{-i\om{}u}du$ be the Fourier transform of $x$.
If $x_c(t) = x(t-c)$ then $\widehat{x}_c(\om) = e^{-ic\om}\,\widehat{x}(\om)$. The
Fourier transform modulus is thus invariant to translation:
\begin{equation}
	|\widehat{x}_c(\omega)| = |\widehat{x}(\omega)|~.
\end{equation}
A spectrogram localizes this translation invariance with a window $\phi$ of
duration $T$ such that $\int \phi(u)du = 1$. It is defined by
\begin{equation}
\label{spectro}
	|\widehat{x}(t,\om)| = \left| \int x(u)\,\phi(u-t)\,e^{-i\om{}u}\,du\right|~.
\end{equation}
If $|c| \ll T$ then one can verify that $|\widehat{x}_c(t,\om)| \approx
|\widehat{x}(t,\om)|$.

However, invariance to time-shifts is often not enough. Suppose that $x$ is not just translated but time-warped to give $x_\tau(t) =
x(t-\tau(t))$ with $|\tau'(t)| < 1$. A representation $\Phi(x)$ is said to be
stable to deformation if its Euclidean norm $\|\Phi(x) - \Phi(x_\tau)\|$ is
small when the deformation is small. The deformation size is measured by
$\sup_t |\tau'(t)|$. If it vanishes then it is a ``pure'' translation without
deformation. Stability is formally defined as a Lipschitz continuity condition
relatively to this metric. It means that there exists $C > 0$ such that for
$x(t)$ and all $\tau$ with $\sup_t |\tau'(t)| < 1$
\begin{equation}
\label{eq:stability} 
\|\Phi(x) - \Phi(x_\tau)\| \leq C\, \sup_t |\tau'(t)|\, \|x\|~.
\end{equation}
The constant $C$ is a measure of stability.

This Lipschitz continuity property implies that time-warping deformations are
locally linearized by $\Phi(x)$. Indeed, Lipschitz continuous operators are
almost everywhere differentiable. It results that $\Phi(x) - \Phi(x_\tau)$ can
be approximated by a linear operator if $\sup_t |\tau'(t)|$ is small. A family
of small deformations thus generate a linear space. In the transformed space,
an invariant to these deformations can then be computed with a linear
projector on the orthogonal complement to this linear space. In Section
\ref{sec:classif} we use linear discriminant classifiers to become selectively
invariant to small time-warping deformations.

A Fourier modulus representation $\Phi(x) = |\widehat x|$ is not stable to
deformation because high frequencies are severely distorted by small
deformations. For example, let us consider a small dilation $\tau(t) =
\epsilon{}t$ with $0 < \epsilon \ll 1$. Since $\tau'(t) = \epsilon$, the
Lipschitz continuity condition
\eqref{eq:stability} becomes
\begin{equation}
\label{contconds}
	\||\widehat{x}|-|\widehat{x_\tau}|\| \leq C\,\epsilon\,\|x\|~.
\end{equation}
The Fourier transform of $x_\tau(t) = x((1-\epsilon)t)$ is
$\widehat{x}_\tau(\om) = (1-\epsilon)^{-1}\,\widehat{x}((1-\epsilon)^{-1}\om)$.
This dilation shifts a frequency component at $\om_0$ by $\epsilon|\om_0|$.
For a harmonic signal $x(t) = g(t)\sum_n
a_n\cos(n\xi{}t)$, the Fourier transform is a sum of partials
\begin{equation}
	\widehat{x}(\om) = \sum_n \frac{a_n}{2}\Big( \widehat{g}(\om-n\xi)+\widehat{g}(\om+n\xi)\Big).
\end{equation}
After time-warping, each partial $\widehat g(\omega \pm n\xi)$ is translated
by $\epsilon{}n \xi$, as shown in the spectrogram of Figure
\ref{fig:dilation-spec}(a). Even though $\epsilon$ is small, at high
frequencies $n \epsilon \xi$ becomes larger than the bandwidth of $\widehat g$.
Consequently, the harmonics $\widehat
g(\omega(1-\epsilon)^{-1}-n\xi)$ of $\widehat x_\tau$ do not overlap with the 
harmonics $\widehat g(\omega - n\xi)$ of $\widehat x$.
The Euclidean distance of $|\widehat x|$ and $|\widehat x_\tau|$
thus does not decrease proportionally to $\epsilon$ if the harmonic amplitudes
$a_n$ are sufficiently large at high frequencies. This proves that
the deformation stability 
condition \eqref{contconds} is not satisfied for any $C > 0$.

The autocorrelation $R x (u) = \int x(t)\, x^{\star}(t-u)\,dt$ is also a
translation invariant representation which has the same deformation
instability as the Fourier transform modulus. Indeed, $\widehat Rx (\omega) =
|\widehat x(\omega)|^2$ so $\|Rx - R x_\tau \| = (2 \pi)^{-1} \||\widehat x|^2
- |\widehat x_\tau|^2 \|$.

\subsection{Mel-frequency Deformation Stability and Filter Banks}
\label{melsdfns}

A mel-frequency spectrogram averages the spectrogram energy with mel-scale
filters $\widehat{\psi}_\la$, where $\la$ is the center frequency of each $\widehat
\psi_{\la} (\omega)$:
\begin{equation}
	\label{eq:mfsc-def}
	\MEL x(t,\la) = \frac{1}{2\pi} \int |\widehat{x}(t,\omega)|^2\,|\widehat{\psi}_{\la}(\om)|^2d\om~.
\end{equation}
The band-pass filters $\widehat \psi_\la$ have a constant-$Q$ frequency bandwidth at high frequencies. Their frequency
support is centered at $\la$ with a bandwidth of the order of $\la/Q$. 
At lower frequencies, instead of being constant-Q, the bandwidth of 
$\widehat \psi_\la$ remains equal to $2 \pi / T$.

The mel-frequency averaging removes deformation instability created by large
displacements of high frequencies under dilations. If $x_\tau (t) =
x((1-\epsilon)t)$ then we saw that each frequency component at $\omega_0$ is
moved by $\epsilon |\omega_0|$, which may be large if $|\omega_0|$ is large.
However, the mel-scale filter $\widehat{\psi}_\lambda (\om)$ covering the
frequency $\om_0$ has a frequency bandwidth of the order of $\lambda/Q \sim
|\om_0|/Q$. As a result, the relative error after averaging by
$|\widehat{\psi}|^2$ is of the order of $\epsilon{}Q$. This is illustrated by
Figure \ref{fig:dilation-spec}(b) on a harmonic signal $x$. After
mel-frequency averaging, the frequency partials of $x$ and $x_\tau$ overlap at
all frequencies. One can verify that $\|Mx(t,\lambda) - M_\tau x(t,\lambda)\|
\leq C\,\epsilon \|x\|$, where $C$ is proportional to $Q$, and does not depend
upon $\epsilon$ and $x$. Unlike the spectrogram (\ref{spectro}), the
mel-frequency spectrogram (\ref{eq:mfsc-def}) satisfies the Lipschitz
deformation stability condition \eqref{eq:stability}.

Mel-scale averaging provides time-warping stability but
loses information. We show that this frequency averaging is equivalent to
a time averaging of a filter bank output, which will provide a strategy to
recover the lost information. Since
$\widehat{x}(t,\om)$ in \eqref{spectro} is the Fourier transform of
$x_t(u) = x(u)\phi(u-t)$, applying Plancherel's formula gives
\begin{align}
	\MEL x(t,\la) &= \frac{1}{2\pi} \int |\widehat{x_t}(\om)|^2\,
|\widehat{\psi}_{\la}(\om)|^2\, d\om \\
	&= \int |x_t\star\psi_{\la}(v)|^2\,dv \\
	&= \int \left|\int x(u)\phi(u-t)\psi_\la(v-u)du\right|^2\,dv
\end{align}
If $\la \gg Q/T$ then $\phi(t)$ is approximately constant on the support of
$\psi_\la(t)$, so $\phi(u-t)\psi_\la(v-u) \approx \phi(v-t)\psi_\la(v-u)$,
and hence
\begin{align}
	\MEL x(t,\la) &\approx \int \left|\int x(u)\psi_\la(v-u)du\right|^2|\phi(v-t)|^2dv \\
\label{Melwaves}
	&= |x\star\psi_\la|^2\star|\phi|^2(t)~.
\end{align}
The frequency averaging of the spectrogram is thus nearly equal to the time
averaging of $|x \star \psi_\la|^2$. In this formulation, the window $\phi$
acts as a lowpass filter, ensuring that the representation is locally
invariant to time-shifts smaller than $T$. Section \ref{sec:wavelet} studies
the properties of the constant-Q filter bank $\{ \psi_\la \}_\la$, which
defines an analytic wavelet transform.

\begin{figure}[t]
\centering
\setlength{\unitlength}{1in}
\begin{picture}(3.3,1.6)
	\put(0.08,0.12){\includegraphics[width=3.2in]{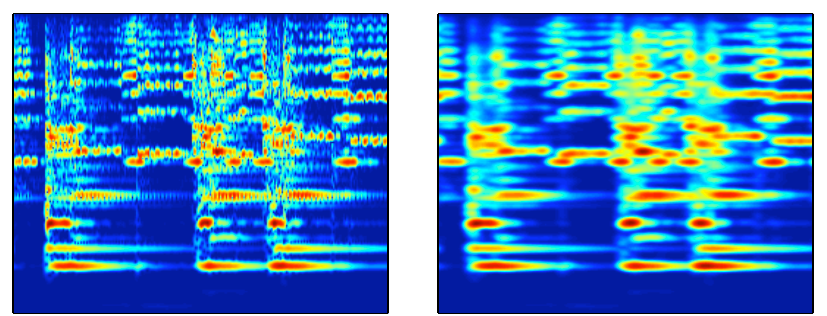}}
	\put(1.62,0.09){\footnotesize$t$}
	\put(3.27,0.09){\footnotesize$t$}
	\put(0.8,0){\footnotesize(a)}
	\put(2.45,0){\footnotesize(b)}
	\put(0.0,1.37){\footnotesize$\log \la$}
	\put(1.65,1.37){\footnotesize$\log \la$}
\end{picture}
\caption{\label{fig:scalogram-to-mfsc}{(a): Scalogram $\log |x \star \psi_\la (t)|^2$ for a musical signal, as a function of $t$ and $\la$. (b): Averaged scalogram $\log |x\star\psi_\la|^2\star\phi^2(t)$ with a
lowpass filter $\phi$ of duration $T = 190~\mathrm{ms}$.}}
\end{figure}

Figures \ref{fig:scalogram-to-mfsc}(a) and \ref{fig:scalogram-to-mfsc}(b)
display $|x\star\psi_\la|^2$ and $|x\star\psi_\la|^2 \star |\phi|^2$,
respectively, for a musical recording. The window duration is $T =
190~\mathrm{ms}$. This time averaging removes fine-scale information such as
vibratos and attacks. To reduce information loss, a mel-frequency spectrogram
is often computed over small time windows of about $25~\mathrm{ms}$. As a
result, it does not capture large-scale structures, which limits
classification performance.

To increase $T$ without losing too much information, it is necessary to
capture the amplitude modulations of $|x\star\psi_\la (t)|$ at scales smaller than $T$, which are important in audio perception. 
The spectrum of these modulation envelopes can
be computed from the spectrogram
\cite{hermansky,vinton2001scalable,mcdermott,ramona:2011} of $|x\star\psi_\la|$, or represented with a
short-time autocorrelation \cite{slaney,patterson}. However, these modulation
spectra are unstable to time-warping deformations. Indeed, a time-warping of
$x$ induces a time-warping of $|x\star\psi_\la|$, and Section
\ref{Fourernsinfs} showed that spectrograms and autocorrelations have
deformation instabilities. Constant-Q averaged modulation spectra
\cite{thompson2003non,ellis-mcdermott} stabilize spectrogram representations with
another averaging along modulation frequencies. According to \eqref{Melwaves},
this can also be computed with a second constant-Q filter bank. The scattering
transform follows this latter approach.

\section{Wavelet Scattering Transform}
\label{sec:scattering}
A scattering transform recovers the information lost by a mel-frequency
averaging with a cascade of wavelet decompositions and modulus operators
\cite{stephane}. It is locally translation invariant and stable to
time-warping deformation. Important properties of constant-Q filter banks are
first reviewed in the framework of a wavelet transform, and the scattering
transform is introduced in Section \ref{sec:scattnet}.

\subsection{Analytic Wavelet Transform and Modulus}
\label{sec:wavelet}
Constant-Q filter banks compute a wavelet transform. We review
the properties of complex analytic wavelet transforms and their modulus,
which are used to calculate mel-frequency spectral coefficients.

A wavelet $\psi (t)$ is a band-pass filter with $\widehat \psi(0) = 0$. We
consider complex wavelets with quadrature phase 
such that $\widehat \psi(\om) \approx 0$ for $\om < 0$.
For any $\la > 0$, a dilated wavelet of center frequency $\la$ is written
\begin{equation}
\label{psidilation}
\psi_\la (t) = \la\, \psi(\la\, t)~~\mbox{and hence}~~
\widehat \psi_\la (\om) = \widehat \psi\Bigl(\frac{\omega} {\la}\Bigr)~.
\end{equation}

The center frequency of $\widehat \psi$ is normalized to $1$. 
In the following, we denote
by $Q$ the number of wavelets per octave, which means that
$\lambda = 2^{k/Q}$ for $k \in \Z$.  The bandwidth of $\widehat \psi$
is of the order of $Q^{-1}$, 
to cover the whole frequency axis with these band-pass wavelet filters.
The support of $\widehat \psi_\la(\om)$ is centered in $\la$ with a frequency
bandwidth $\la/Q$ whereas the energy of $\psi_\la (t)$ is concentrated around
$0$ in an interval of size $2 \pi Q/\la$. To guarantee that this interval is
smaller than $T$, we define $\psi_\la$ with \eqref{psidilation} only for $\la
\geq 2 \pi Q/T$. For $\la < 2 \pi Q/T$, the lower-frequency interval $[0,2 \pi
Q/T]$ is covered with about $Q-1$ equally-spaced filters $\widehat \psi_\la$
with constant frequency bandwidth $2 \pi/T$. For simplicity, these
lower-frequency filters are still called wavelets. 
We denote by $\Lambda$ the grid of all wavelet center frequencies $\la$.

The wavelet transform of $x$ computes a convolution of $x$
with a low-pass filter
$\phi$ of frequency bandwidth $2 \pi/T$, and convolutions
with all higher-frequency wavelets $\psi_\la$ for $\lambda \in \Lambda$:
\begin{equation}
\label{wavensdfons}
	W x = \Bigl(x\star\phi(t)\,,\, x\star\psi_\la (t)
\Bigr)_{t \in \R,\la\in{\Lambda}}~.
\end{equation}
This time index $t$ is not critically sampled as in wavelet bases so this
representation is highly redundant. 
The wavelet $\psi$ and the low-pass filter
$\phi$ are designed to build filters which cover the whole frequency axis,
which means that
\begin{equation}
	\label{eq:paley-littlewood}
A(\omega) = 
|\widehat{\phi}(\omega)|^2+\frac{1}{2}\sum_{\la\in{\Lambda}} 
\Big(|\widehat{\psi}_\la(\omega)|^2+|\widehat{\psi}_\la(-\omega)|^2 \Big) ~
\end{equation}
satisfies, for all $\om \in \R$:
\begin{equation}
	\label{eq:paley-littlewood2}
1-\alpha \leq A(\omega) \leq 1~~\mbox{with}~~\alpha < 1~.
\end{equation}
This condition implies that the wavelet transform $W$ is
a stable and invertible operator. Multiplying (\ref{eq:paley-littlewood2}) by
$|\widehat{x}(\omega)|^2$ and applying the Plancherel formula
\cite{mallatbook} gives
\begin{equation}
	\label{eq:wavelet-norm}
	(1-\alpha)\|x\|^2 \le 
\|Wx\|^2 \le \|x\|^2~,
\end{equation}
where $\|x\|^2 = \int |x(t)|^2 dt$ and
where the squared norm of $W x$ sums all squared coefficients:
\[
\|Wx\|^2 = \int |x\star\phi(t)|^2 \, dt
+\sum_{\la\in{\Lambda}} \int |x\star\psi_\la (t)|^2\, dt ~.
\]
The upper bound (\ref{eq:wavelet-norm}) means that $W$ is a contractive
operator and the lower bound implies that it has a stable inverse. One can
also verify that the pseudo-inverse of $W$ recovers $x$ with the following
formula
\begin{equation}
	\label{eq:reproducing-kernel}
	x(t) = (x\star\phi)\star\overline{\phi}(t)+\sum_{\la\in{\Lambda}}
{\rm Real}\Big((x\star\psi_{\la})\star\overline{\psi}_{\la}(t)\Big)~,
\end{equation}
with reconstruction filters defined by
\begin{equation}
	\label{eq:filtern}
\widehat{\overline{\phi}}(\omega) = \frac{{\widehat{\phi}^*(\omega)}} {A(\omega)}
~~\mbox{and}~~
\widehat{\overline{\psi}}_\la(\omega) = \frac{{\widehat{\psi}^*_\la(\omega)}} {A(\omega)}~,
\end{equation}
where $z^*$ is the complex conjugate of $z \in \mathbb{C}$. 
If $\alpha = 0$ in (\ref{eq:paley-littlewood2}) 
then $W$ is said to be a tight frame operator, in which case
$\overline{\phi}(t) = \phi(-t)$ and $\overline{\psi}_\la(t) = \psi_\la^*(-t)$.

One may define an analytic wavelet with an octave resolution $Q$ as $\psi(t) =
e^{i t}\, \theta(t)$ and hence $\widehat \psi(\omega) = \widehat
\theta(\omega-1)$ where $\widehat \theta$ is the transfer function of a
low-pass filter whose bandwidth is of the order of $Q^{-1}$. If
$\widehat \theta(-1) \neq 0$ then we define
$\widehat \psi(\omega) = \widehat \theta(\omega-1) - \widehat
\theta(\omega) \widehat \theta(-1)/\widehat \theta(0)$, which guarantees that
$\widehat \psi(0) = 0$.
If $\theta$ is a Gaussian then $\psi$ is called a Morlet wavelet,
which is almost analytic 
because $|\widehat \psi(\om)|$ is small but not strictly zero for $\om < 0$.
Figure \ref{fig:gabor} shows Morlet wavelets $\widehat \psi_\la$ with $Q = 8$.
In this case $\phi$ is also chosen to be a Gaussian. 
For $Q = 1$, tight frame
wavelet transforms can also be obtained by choosing $\psi$ to be the analytic
part of a real wavelet which generates an orthogonal wavelet basis, such as a
cubic spline wavelet \cite{stephane}. Unless indicated otherwise, wavelets used in this paper are Morlet wavelets.

Following \eqref{Melwaves}, mel-frequency spectrograms can be approximated
using a non-linear wavelet modulus operator which removes the complex phase of
all wavelet coefficients:
\begin{equation}
\label{wtildedef}
	|W| x = \Bigl(x\star\phi(t)\,,\, |x\star\psi_\la (t)|
\Bigr)_{t \in \R,\la\in{\Lambda}}~.
\end{equation} 
Taking the modulus of analytic wavelet coefficient can be interpreted as a
sub-band Hilbert envelope demodulation. Demodulation is used to separate
carriers and modulation envelopes. When a carrier or pitch frequency can be
detected, then a linear coherent demodulation is efficiently implemented by
multiplying the analytic signal with the conjugate of the detected carrier
\cite{schimmel2005coherent,turner2011probabilistic,sell2010solving}. However,
many signals such as unvoiced speech are not modulated by isolated carrier
frequency, in which case coherent demodulation is not well defined. Non-linear
Hilbert envelope demodulations apply to any band-pass analytic signals, but if
a carrier is present then the Hilbert envelope depends both on the carrier and
on the amplitude modulation. Section \ref{ModusSpec} explains how to isolate
amplitude modulation coefficients from Hilbert envelope measurements, whether
a carrier is present or not.

Although a wavelet modulus operator removes the complex phase, it does not
lose information because the temporal variation of the multiscale envelopes is
kept. A signal cannot be reconstructed from the modulus of its Fourier
transform, but it can be recovered from the modulus of its wavelet transform.
Since the time variable $t$ is not subsampled, a wavelet transform has more
coefficients than the original signal. These coefficients are highly redundant
when filters have a significant frequency overlap. For particular families of
analytic wavelets, one can prove that $|W|$ is an invertible operator with a
continuous inverse \cite{Irene}. This is further studied in Section
\ref{sec:inverse}.

The operator $|W|$ is contractive. Indeed, the wavelet transform $W$ is
contractive and the complex modulus is contractive in the sense that $||a| -
|b| | \leq |a - b|$ for any $(a,b) \in \mathbb{C}^2$ so
\[
\|\,|W| x - |W| x' \|^2 \leq \|W x - W x' \|^2 \leq \|x - x' \|^2~.
\]
If $W$ is a tight frame operator then $\|\,|W| x\| = \|W x\| = \|x\|$ so $|W|$
preserves the signal norm.

\begin{figure}[t]
\centering
\setlength{\unitlength}{1in}
\begin{picture}(3.4,0.7)
\put(0,0.1){\includegraphics[width=3.3in]{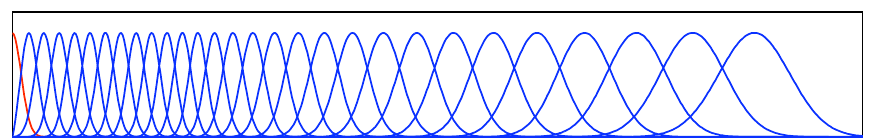}}
\put(3.29,0.08){\footnotesize $\omega$}
\end{picture}
\caption{\label{fig:gabor}{Morlet wavelets $\widehat \psi_{\la} (\om)$
with $Q = 8$ wavelets per octave, for different $\la$. The low-frequency
filter $\widehat \phi(\om)$ (in red) is a Gaussian.}}
\end{figure}

\subsection{Deep Scattering Network}
\label{sec:scattnet}

We showed in (\ref{Melwaves}) that mel-frequency spectral coefficients $M
x(t,\la)$ are approximately equal to averaged squared wavelet coefficients $|x
\star \psi_\la|^2 \star |\phi|^2(t)$. Large wavelet coefficients are
considerably amplified by the square operator. To avoid amplifying outliers,
we remove the square and calculate $|x\star\psi_{\la}|\star\phi(t)$ instead.
High frequencies removed by the low-pass filter $\phi$ are recovered by a new
set of wavelet modulus coefficients. Cascading this procedure defines a
scattering transform.

A locally translation invariant descriptors of $x$ is obtained with a
time-average $S_0 x(t) = x \star \phi(t)$, which removes all high frequencies.
These high-frequencies are recovered by a wavelet modulus transform
\[
|W_1| x = \Bigl(x\star\phi(t) \,,\, |x\star\psi_{\la_1}(t)|
\Bigr)_{t \in \R,\la_1\in\Lambda_1}~.
\]
It is computed with wavelets $\psi_\lau$ having an octave frequency resolution
$Q_1$. For audio signals we set $Q_1 = 8$, which defines wavelets having the
same frequency resolution as mel-frequency filters. Audio signals have little
energy at low frequencies so $S_0 x(t) \approx 0$. Approximate mel-frequency
spectral coefficients are obtained by averaging the wavelet modulus
coefficients with $\phi$:
\begin{equation}
	\label{eq:first-order-def}
S_1 x(t,\lau) = |x\star\psi_{\la_1}|\star\phi(t) ~.
\end{equation}
These are called first-order scattering coefficients. They are computed with a
second wavelet modulus transform $|W_2|$ 
applied to each $|x\star\psi_{\la_1}|$, which
also provides complementary high-frequency wavelet coefficients:
\[
|W_2|\, |x\star\psi_{\la_1}| = 
\Big(|x\star\psi_{\la_1}|\star\phi
\,,\, ||x\star\psi_{\la_1}| \star \psi_\lad |\Big)_{\la_2 \in \Lambda_2}
.
\]
The wavelets $\psi_\lad$ have an octave resolution $Q_2$ which may be
different from $Q_1$. It is chosen to get a sparse representation which means
concentrating the signal information over as few wavelet coefficients as
possible. These coefficients are averaged by the lowpass filter $\phi$ of size $T$, which ensures local invariance to time-shifts, as with the first-order coefficients. It defines second-order scattering coefficients:
\begin{equation*}
	\label{eq:second-order-def}
S_2 x(t,\la_1,\lad) = ||x\star\psi_{\la_1}| \star \psi_\lad| \star\phi(t) ~.
\end{equation*}
These averages are computed by applying a third wavelet modulus
transform $|W_3|$ to each $||x\star\psi_{\la_1}| \star \psi_\lad|$.
It computes their wavelet modulus coefficients through convolutions with 
a new set of wavelets $\psi_{\la_3}$ having an octave resolution $Q_3$.
Iterating this process defines scattering coefficients at any order $m$.

For any $m \geq 1$, iterated wavelet modulus convolutions are written:
\begin{equation}
\label{insdf8sdf8}
U_m x(t,\la_1,...,\la_m) = |\,||x \star \psi_\lau| \star... |\star \psi_{\la_m} (t)|~,
\end{equation}
where $m$th-order wavelets $\psi_{\la_m}$ have an octave resolution $Q_m$,
and satisfy the stability condition \eqref{eq:paley-littlewood2}. Averaging
$U_m x$ with $\phi$ gives scattering coefficients of order $m$:
\begin{eqnarray*}
S_m x (t,\la_1,...,\la_m) &=& 
|\,||x \star \psi_\lau| \star... |\star \psi_{\la_m} |\star \phi(t)\\
&=& U_m x(.,\la_1,...,\la_m) \star \phi(t)~.
\end{eqnarray*}
Applying $|W_{m+1}|$ on $U_m x$ computes both $S_m x$ and $U_{m+1} x$:
\begin{equation}
\label{indfsdf}
|W_{m+1}| \,U_{m} x = (S_{m} x \,,\,U_{m+1} x)~.
\end{equation}
A scattering decomposition of maximal order $\mm$ is thus defined by
initializing $U_0 x = x$, and recursively computing (\ref{indfsdf}) for $0
\leq m \leq \mm$. This scattering transform is illustrated in Figure
\ref{algo}. The final scattering vector aggregates all scattering coefficients
for $0 \leq m \leq \mm$:
\begin{equation}
S x = ( S_m x )_{0 \leq m \leq \mm}.
 \label{eq:scattvec}
\end{equation}

\begin{figure*}
\centering
\setlength{\unitlength}{0.4cm}
\begin{picture}(40,15)
	\put(15,0){\includegraphics[width=7.7cm]{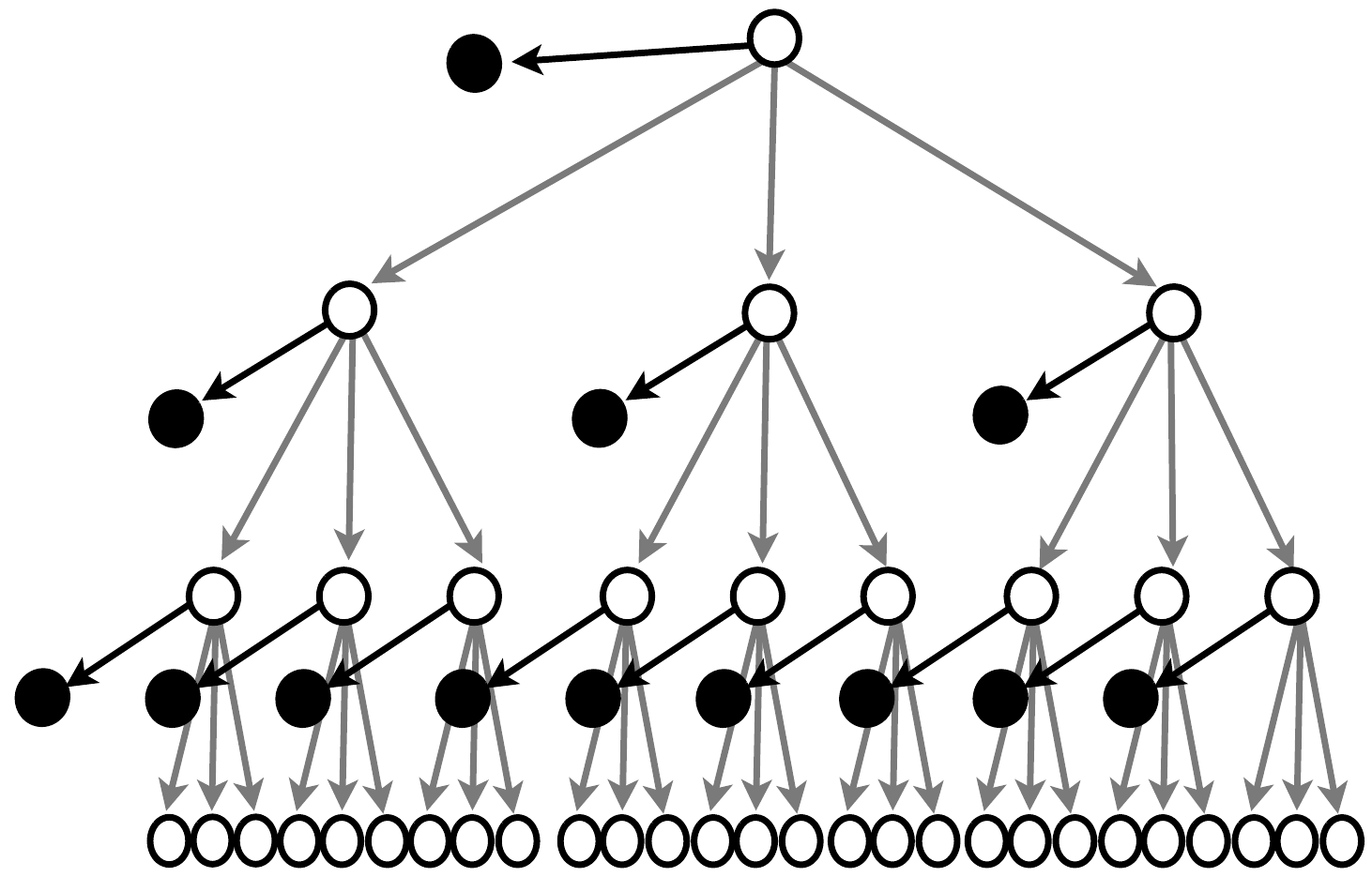}}
	\put(25.6,13){$x$}
	\put(15,12.4){\framebox(6.5,1.6){$S_0x = x\star\phi$}}
	\put(17.5,9.5){$|x\star\psi_\lau|$}
	\put(6.5,6.5){\framebox(10,1.6){$Sx(t,\lau) = |x\star\psi_\lau|\star\phi$}}
	\put(11.3,5){$||x\star\psi_\lau|\star\psi_\lad|$}
	\put(0.5,2){\framebox(14,1.6){$Sx(t,\lau,\lad) = ||x\star\psi_\lau|\star\psi_\lad|\star\phi$}}
	\put(7.5,0.3){$|||x\star\psi_\lau|\star\psi_\lad|\star\psi_\lat|$}
\end{picture}
\caption{A scattering transform iterates on wavelet modulus operators $|W_m|$
to compute cascades of $m$ wavelet convolutions and moduli stored in $U_m x$,
and to output averaged scattering coefficients $S_m x$.}
\label{algo}
\end{figure*}

The scattering cascade of convolutions and non-linearities can also be
interpreted as a convolutional network \cite{lecun}, where $U_m x$ is the set
of coefficients of the $m$th internal network layer. These networks have been
shown to be highly effective for audio classification
\cite{lee,hinton2012deep,deng2013deep,graves2013speech,humphrey2012learning,hamel2010learning,battenberg2012analyzing}.
However, unlike standard convolutional networks, each such layer has an output
$S_m x = U_m x \star \phi$, not just the last layer. In addition, all filters
are predefined wavelets and are not learned from training data. A scattering
transform, like MFCCs, provide a low-level invariant representation of the
signal, without learning. It relies on prior information concerning the type
of invariants that need to be computed, in this case relatively to time-shifts
and time-warping deformations, or in Section \ref{sec:cls} relatively to
frequency transpositions. When no such information is available, or if the
sources of variability are much more complex, then it is necessary to learn
them from examples, which is a task well suited for deep neural networks.
In that sense both approaches are complementary.

The wavelet octave resolutions are optimized at each layer $m$ to produce
sparse wavelet coefficients at the next layer. This better preserves the
signal information as explained in Section \ref{sec:inverse}. 
Sparsity seems also to play an important role 
for classification \cite{henaff2011unsupervised,nam2012learning}.
For audio signals $x$, 
choosing $Q_1 = 8$ wavelets per octave has been shown to provide
sparse representations of a mix of speech, music and environmental signals
\cite{smith-lewicki:2006}. It nearly corresponds to a mel-scale frequency
subdivision.

At the second order, choosing $Q_2 = 1$ defines wavelets with more narrow time
support, which are better adapted to characterize transients and
attacks. Section \ref{sec:model} shows
that musical signals including modulation structures such as tremolo may
however require wavelets having better frequency resolution, and hence $Q_2 >
1$. At higher orders $m \geq 3$ we always set $Q_m = 1$, but we shall see
that these coefficients can often be neglected.

The scattering cascade has similarities with several neurophysiological models
of auditory processing, which incorporate cascades of constant-Q filter banks
followed by non-linearities \cite{dau,shamma}. The first filter bank with $Q_1
= 8$ models the cochlear filtering, whereas the second filter bank corresponds
to later processing in the models with filters that have $Q_2 = 1$
\cite{dau,shamma}.

\section{Scattering Properties}
\label{sec:scattprop}

We briefly review important properties of scattering transforms, including
stability to time-warping deformation, energy conservation, and describe
a fast computational algorithm.

\subsection{Time-Warping Stability}
\label{sec:stability}
Stability to time-warping allows one to use linear operators for calculating
invariant descriptors to small time-warping deformations. The Fourier
transform is unstable to deformation because dilating a sinusoidal wave yields
a new sinusoidal wave of different frequency which is orthogonal to the
original one. Section \ref{sec:mfsc} explains that mel-frequency spectrograms
become stable to time-warping deformation with a frequency averaging. One can
prove that a scattering representation $\Phi(x) = Sx$ satisfies the Lipschitz
continuity condition (\ref{eq:stability}) because wavelets are stable to
time-warping \cite{stephane}. Let us write $\psi_{\lambda,\tau}
(t) = \psi_{\lambda} (t- \tau(t))$. One can verify that there exists $C > 0$
such that $\|\psi_\lambda - \psi_{\lambda,\tau} \| \leq C \|\psi_\la\| \sup_t
|\tau'(t)|$, for all $\lambda$ and all $\tau(t)$. This property is at the core
of the scattering stability to time-warping deformations.

The squared Euclidean norm of a scattering vector $S x$ is the sum 
of its coefficients squared at all orders:
\begin{eqnarray*}
	\|Sx\|^2 &=&  \sum_{m=0}^{\mm} \|S_m x\|^2 \\
& = & \sum_{m=0}^{\mm} \sum_{\la_1,\ldots,\la_m} 
\int |S_m x(t,\la_1,\ldots,\la_m)|^2\,dt\,.
\end{eqnarray*}
We consider deformations $x_\tau (t) = x(t- \tau(t))$ with $|\tau'(t)| < 1$
and $\sup_t |\tau (t)| \ll T$, which means that the maximum displacement is
small relatively to the support of $\phi$. 
One can prove that there exists a constant $C$ such that for
all $x$ and any such $\tau$ \cite{stephane}:
\begin{equation}
	\|Sx_\tau-Sx\| \leq C\,\sup_{t}|\tau'(t)|\,\|x\|~,
\end{equation}
up to second-order terms. As explained for mel-spectral decompositions,
the constant $C$ is inversely proportional to 
the octave bandwidth of wavelet filters. Over multiple scattering layers,
we get $C = C_0 (\max_m Q_m)$. 
For Morlet wavelets, numerical 
experiments on many examples give $C_0 \approx 2$.

\subsection{Contraction and Energy Conservation}
\label{constadfsec}

We show that a scattering transform is contractive and can preserve energy.
We denote $\|A x\|^2$ the squared Euclidean norm 
of a vector of coefficients $A x$, such as $W_m x$, $S_m x$, $U_m x$ or $S x$.
Since $S x$ is computed by cascading wavelet modulus operators
$|W_m|$, which are all contractive, it results that $S$ is also contractive:
\begin{equation}
 \|S x - S x' \| \leq \|x - x' \|~.
\end{equation}
A scattering transform is therefore stable to additive noise.

If each wavelet transform is a tight frame, that is $\alpha = 0$ in \eqref{eq:paley-littlewood2}, each $|W_m|$ preserves
the signal norm. Applying this property to 
$|W_{m+1}| U_{m} x = (S_{m} x \,,\,U_{m+1} x)$ yields
\begin{equation}
\|U_{m} x \|^2 = \| S_{m} x \|^2 + \|U_{m+1} x\|^2 ~.
\label{eq:norm_cascade}
\end{equation}
Summing these equations $0 \leq m \leq \mm$ proves that 
\begin{equation}
\| x \|^2 =  \|S x \|^2 + \|U_{\mm+1} x \|^2 ~. 
\end{equation}
Under appropriate assumptions on the mother wavelet $\psi$, one can prove that
$\|U_{\mm+1} x \|$ goes to zero as $\mm$ increases, which implies that $\|S x
\| = \|x\|$ for $\mm = \infty$ \cite{stephane}. This property comes from the
fact that the modulus of analytic wavelet coefficients computes a smooth
envelope, and hence pushes energy towards lower frequencies. By iterating on
wavelet modulus operators, the scattering transform progressively propagates
all the energy of $U_m x$ towards lower frequencies, which is captured by the
low-pass filter of scattering coefficients $S_m x = U_m x \star \phi$.

One can verify numerically that $\|U_{\mm+1} x \|$ converges to zero
exponentially when $\mm$ goes to infinity and hence that $\|S x\|$ converges
exponentially to $\|x\|$. Table \ref{table:scattering-energy} gives the
fraction of energy $\|S_m x \|^2/\|x\|^2$ absorbed by each scattering order.
Since audio signals have little energy at low frequencies, $S_0 x$ is very
small and most of the energy is absorbed by $S_1 x$ for $T = 23~\mathrm{ms}$.
This explains why mel-frequency spectrograms are typically sufficient at these
small time scales. However, as $T$ increases, a progressively larger
proportion of energy is absorbed by higher-order scattering coefficients. For
$T = 1.5~\mathrm{s}$, about $56\%$ of the signal energy is captured in $S_2
x$. Section \ref{sec:model} shows that at this time scale, important amplitude
modulation information is carried by these second-order coefficients. For $T =
1.5~\mathrm{s}$, $S_3 x$ carries $25\%$ of the signal energy. It increases as
$T$ increases, but for audio classification applications studied in this
paper, $T$ remains below $1.5~\mathrm{s}$, so these third-order coefficients
are less important than first- and second-order coefficients. We therefore
concentrate on second-order scattering representations:
\begin{equation}
\label{scatondsf0s}
S x = \Bigl(S_0 x(t)\, , \, S_1 x(t,\la_1)\,,\,S_2 x(t,\la_1,\la_2) \Bigr)_{t,\lau,\lad}~.
\end{equation}

\begin{table}
	\begin{center}
	\begin{tabular}{|c|cccc|}
		\hline
		T & $m=0$ & $m=1$ & $m=2$ & $m=3$ \\
		\hline
		$23~\mathrm{ms}$ & 0.0\% & 94.5\% & 4.8\% & 0.2\% \\
		$93~\mathrm{ms}$ & 0.0\% & 68.0\% & 29.0\% & 1.9\% \\
		$370~\mathrm{ms}$ & 0.0\% & 34.9\% & 53.3\% & 11.6\% \\
		$1.5~\mathrm{s}$ & 0.0\% & 27.7\% & 56.1\% & 24.7\% \\
		\hline
	\end{tabular}
	\end{center}
	\caption{Averaged values $\|S_m x \|^2 / \|x\|^2$ computed
for signals $x$ in the TIMIT speech
dataset \cite{timit}, as a function of order $m$ and averaging scale $T$. For $m = 1$, $S_m x$ is calculated 
by Morlet wavelets with $Q_1 = 8$, and for $m= 2,3$ by cubic spline wavelets with $Q_2 = Q_3 = 1$.}
	\label{table:scattering-energy}
\end{table}

\subsection{Fast Scattering Computation}
\label{sec:fastscatt}
Subsampling scattering vectors provide a reduced representation, which leads
to a faster implementation. Since the averaging window $\phi$ has a duration of the order of $T$, we compute scattering vectors with half-overlapping windows at $t = kT/2$ with $k \in \Z$.

We suppose that $x(t)$ has $N$ samples over each frame of duration $T$,
and is thus sampled at a rate $N/T$. 
For each time frame $t = kT/2$, the number of
first-order wavelets $\psi_{\lambda_1}$ is about 
$Q_1 \log_2 N$ so there are about $Q_1
\log_2 N$ first-order coefficients $S_1 x(t , \lau)$. We now show that the
number of non-negligible second-order coefficients $S_2 x (t , \lau,\lad)$ 
which needs to be computed is about $Q_1 Q_2 (\log_2 N)^2/2$.

The wavelet transform envelope $|x \star\psi_{\la_1}(t)|$ is a
demodulated signal having approximately the same frequency bandwidth as
$\widehat \psi_{\la_1}$. Its Fourier transform is mostly supported in the interval
$[-\la_1 Q_1^{-1}, \la_1 Q_1^{-1}]$ for $\la_1 \geq 2 \pi Q_1/T$, and in
$[-2\pi T^{-1},2\pi T^{-1}]$ for $\la_1 \leq 2 \pi Q_1/T$. If the support of 
$\widehat \psi_\lad$ centered at $\lad$ does not intersect the
frequency support of $| x \star \psi_{\lau}|$, then
\[
|| x \star \psi_{\lau}| \star \psi_\lad | \approx 0~.
\]
One can verify that non-negligible second-order coefficients satisfy
\begin{equation}
\label{fresdfn}
\la_{2} \leq \max(\la_1 Q_1^{-1}, 2 \pi T^{-1}) ~.
\end{equation}
For a fixed $t$, a direct calculation then shows that there are of the order of $Q_1 Q_2 (\log_2 N)^2/2$ second-order scattering coefficients. 
Similar reasoning extends this result to show that there are 
about $Q_1 \ldots Q_m (\log_2 N)^m / m!$ non-negligible $m$th-order 
scattering coefficients. 

To compute $S_1 x$ and $S_2 x$ we first calculate $U_1 x$ and $U_2 x$ and
average them with $\phi$. Over a time frame of duration $T$, to reduce
computations while avoiding aliasing, $|x \star \psi_\lau (t)|$ is subsampled
at a rate which is twice its bandwidth. The family of filters $\{\widehat
\psi_\lau \}_{\lau \in \Lambda_1}$ covers the whole frequency domain and
$\Lambda_1$ is chosen so that filter supports barely overlap. Over a time
frame where $x$ has $N$ samples, with the above subsampling we compute
approximately $2N$ first-order wavelet coefficients $\{ |x \star \psi_\lau
(t)| \}_{t, \lau \in \Lambda_1}$. Similarly, $||x \star \psi_\lau| \star
\psi_\lad(t)|$ is subsampled in time at a rate twice its bandwidth. Over the
same time frame, the total number of second-order wavelet coefficients for all
$t$, $\lau$ and $\lad$ stays below $2 N$. With a fast Fourier transform (FFT),
these first- and second-order wavelet modulus coefficients are computed using
$O(N \log N)$ operations. The resulting scattering coefficients $S_1
x(t,\lau)$ and $S_2 x(t,\lau,\lad)$ are also calculated with $O(N \log N)$
operations, with FFT convolutions with $\phi$.

\section{Inverse Scattering}
\label{sec:inverse}
To better understand the information carried by scattering coefficients, this
section studies a numerical inversion of the transform. Since a scattering
transform is computed by cascading wavelet modulus operators $|W_m|$, the
inversion approximately inverts each $|W_m|$ for $m < \mm$. At the maximum
depth $m = \mm$, the algorithm begins with a deconvolution, estimating
$U_{\mm} x (t)$ at all $t$ on the sampling grid of $x(t)$, from $S_{\mm} x
(kT/2) = U_{\mm} x \star \phi (kT/2)$.

Because of the subsampling, one cannot compute $U_\mm x$ from $S_\mm x$
exactly. This deconvolution is thus the main source of error. To take
advantage of the fact that $U_{\mm} x \geq 0$, the deconvolution is computed
with the Richardson-Lucy algorithm \cite{lucy:1974}, which preserves
positivity if $\phi \geq 0$. We initialize $y_0(t)$ by interpolating $S_{\mm}
x(kT/2)$ linearly on the sampling grid of $x$, which introduces error because
of aliasing. The Richardson-Lucy deconvolution iteratively computes
\begin{equation} 
y_{n+1}(t) = y_n(t) \cdot
\left[\left(\frac{y_0}{y_n \star\phi}\right)\star\widetilde\phi(t)\right]\,,
\end{equation}
with $\widetilde\phi(t) = \phi(-t)$. Since it converges to the pseudo-inverse
of the convolution operator applied to $y_0$, it blows up when $n$ increases
because of the deconvolution instability. Deconvolution algorithms thus stop
the iterations after a fixed number of iterations, which is set to $30$ in
this application. The result is then our estimate of $U_{\mm}x$.

Once an estimation of $U_\mm x$ is calculated by deconvolution, we compute an
estimate $\widetilde x$ of $x$ by inverting each $|W_m|$ for $\mm \geq m > 0$. 
The wavelet transform of a signal $x$ of size $N$ is a vector
$W x = (x\star\phi,x\star\psi_\la)_{\la\in\La}$ of about $Q N \log_2 N$
coefficients, where $Q$ is the number of wavelets $\psi_\la$
per octave. These coefficients
live in a subspace $\bf V$ of dimension $N$. To recover $W x$ from
$|W| x = (x\star\phi,|x\star\psi_\la|)_{\la\in\La}$, we search for
a vector in $\bf V$ whose modulus values are specified by $|W| x$. 
This a non-convex optimization problem. Recent convex relaxation approaches
\cite{candes,irene-daspremont} are able to compute exact solutions, but they
require too much computation and memory for audio applications. Since the main
source of errors is introduced at the deconvolution stage, one can use an
approximate but fast inversion algorithm. 
The inversion of $|W|$ is typically more stable when $|W| x$ is sparse
because there is no phase to recover if $|x\star\psi_\la|=0$.
This motivates using wavelets $\psi_{\la_m}$ which provide sparse
representations at each order $m$. 

Griffin \& Lim \cite{griffin-lim} showed that alternating projections
recovers good quality audio signals from spectrogram values, but
with large mean-square errors because the algorithm is trapped in local
minima. The same algorithm inverts $|W|$ by alternating projections
on the wavelet transform space $\bV$ and on the modulus constraints.
An estimation $\widetilde x$ of $x$ is calculated from $|W| x$, 
by initializing
$\widetilde{x}_0$ to be a Gaussian white noise. For any $n \geq 0$, 
$\widetilde{x}_{n+1}$ is computed from $\widetilde{x}_n$ by first
adjusting the modulus of its wavelet coefficients, with a non-linear projector
\begin{equation}
	\label{eq:modulus-proj}
	{z}_{\la}(t) = |x\star\psi_\la(t)|\,\frac{\widetilde{x}_n\star\psi_\la(t)}{|{\widetilde{x}_n}\star\psi_\la(t)|}~.
\end{equation}
Applying the wavelet transform pseudo-inverse \eqref{eq:reproducing-kernel} yields
\begin{equation}
\label{eq:pseudo-inv}
\widetilde{x}_{n+1} = x \star \phi \star\overline{\phi}(t)+\sum_{\la\in\La}
{\rm Real}\Big(z_\la\star\overline{\psi}_{\la}(t)\Big)~.
\end{equation}
The dual filters are defined in \eqref{eq:filtern}. One can verify that $W
\widetilde {x}_{n+1}$ is the orthogonal projection of $\{x \star \phi ,
z_\lambda \}_{\la \in \Lambda}$ in $\bV$. Numerical experiments are performed
with $n = 30$ iterations, and we set $\widetilde x = \widetilde x_n$.

When $\mm = 1$, an approximation $\widetilde{x}$ of $x$ is computed from from
$(S_0 x , S_1 x)$ by first estimating $U_1 x$ from $S_1 x = U_1 x \star \phi$
with the Richardson-Lucy deconvolution algorithm. We then compute
$\widetilde{x}$ from $S_0 x$ and this estimation of $U_1 x$ by approximately
inverting $|W_1|$ with the Griffin \& Lim algorithm. When $T$ is above
$100~\mathrm{ms}$, the deconvolution loses too much information, and audio
reconstructions obtained from first-order coefficients are crude. Figure
\ref{fig:recon}(a) shows the scalograms $\log |{x} \star \psi_\lau(t)|$ of a
speech and a music signal, and the scalograms $\log |\widetilde{x} \star
\psi_\lau(t)|$ of their approximations $\widetilde x$ from first-order
scattering coefficients.

When $\mm = 2$, the approximation $\widetilde{x}$ is calculated from $(S_0 x ,
S_1 x, S_2 x)$ by applying the deconvolution algorithm to $S_2 x = U_2 x \star
\phi$ to estimate $U_2 x$, and then by successively inverting $|W_2|$ and
$|W_1|$ with the Griffin \& Lim algorithm. Figure \ref{fig:recon}(c) shows
$\log |\widetilde{x} \star \psi_\lau(t)|$ for the same speech and music
signals. Amplitude modulations, vibratos and attacks are restored with greater
precision by incorporating second-order coefficients, yielding much better
audio quality compared to first-order reconstructions. However, even with $\mm
= 2$, reconstructions become crude for $T \geq 500~\mathrm{ms}$. Indeed, the
number of second-order scattering coefficients $Q_1 Q_2 \log_2^2 N / 2$ is too
small relatively to the number $N$ audio samples in each audio frame, and they
do not capture enough information. Examples of audio reconstructions are
available at {\it http://www.di.ens.fr/data/scattering/audio/}.

\begin{figure}[t]
\setlength{\unitlength}{1in}
\centering
\begin{picture}(3.4,2.60)
\put(0.1,0.1){\includegraphics[width=3.25in]{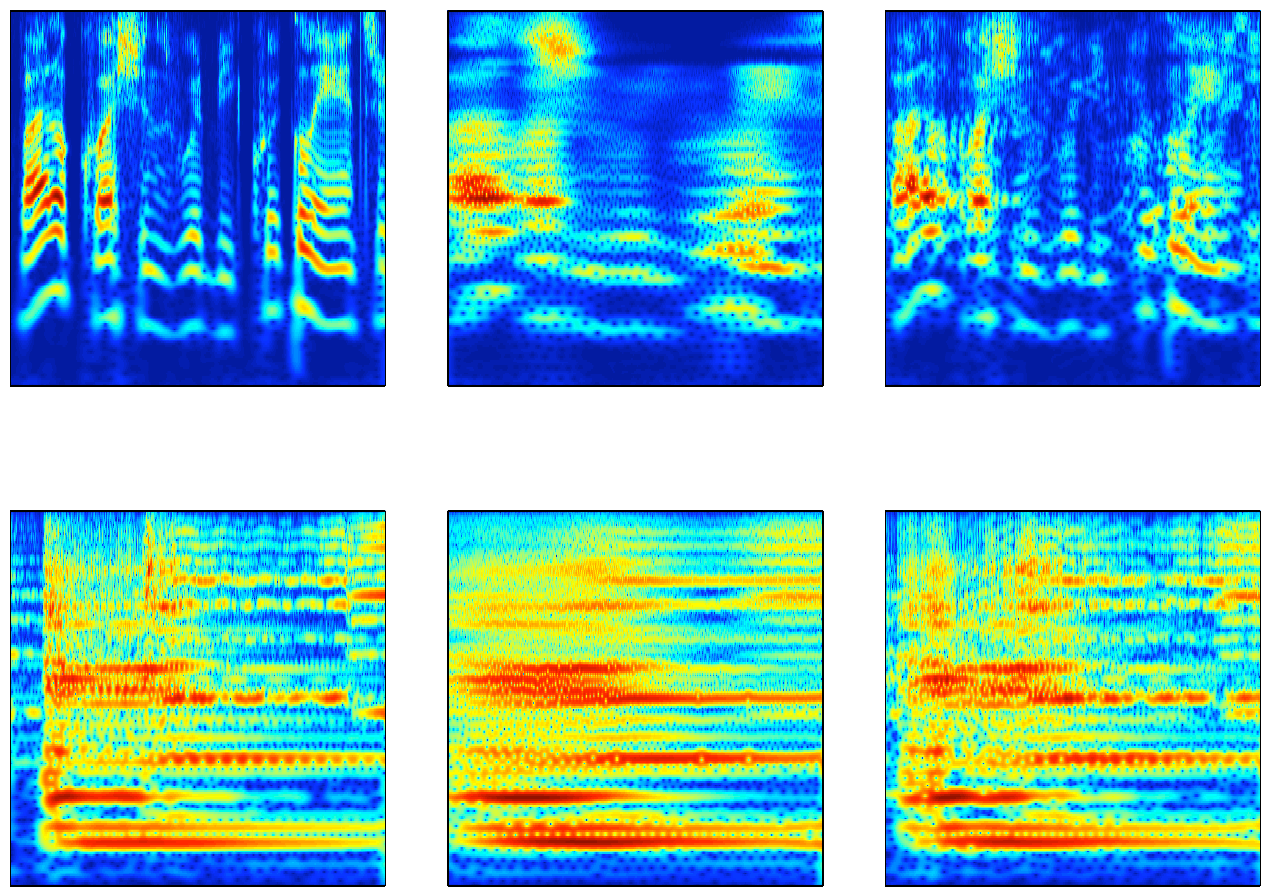}}
\put(1.13,0.08){\footnotesize $t$}
\put(2.24,0.08){\footnotesize $t$}
\put(3.36,0.08){\footnotesize $t$}
\put(0.00,1.15){\footnotesize $\log \lau$}
\put(1.12,1.15){\footnotesize $\log \lau$}
\put(2.24,1.15){\footnotesize $\log \lau$}
\put(1.13,1.35){\footnotesize $t$}
\put(2.24,1.35){\footnotesize $t$}
\put(3.36,1.35){\footnotesize $t$}
\put(0.00,2.43){\footnotesize $\log \lau$}
\put(1.12,2.43){\footnotesize $\log \lau$}
\put(2.24,2.43){\footnotesize $\log \lau$}
\put(0.57,0){\footnotesize (a)}
\put(1.68,0){\footnotesize (b)}
\put(2.77,0){\footnotesize (c)}
\end{picture}
\caption{\label{fig:recon}{\it (a): Scalogram $\log |x \star \psi_\lau(t)|$ for recordings of speech (top) and a cello (bottom).
(b,c): Scalograms $\log |\widetilde{x} \star \psi_\lau(t)|$ of reconstructions $\widetilde x$ from first-order scattering coefficients ($\mm = 1$) in (b), and from first- and  second-order coefficients ($\mm = 2$) in (c). 
Scattering coefficients were computed with 
$T = 190~\mathrm{ms}$ for the speech signal and $T = 370~\mathrm{ms}$ for the cello signal.}}
\end{figure}

\section{Normalized Scattering Spectrum}
\label{sec:model}

To reduce redundancy and increase invariance, Section \ref{sec:norm}
normalizes scattering coefficients. Section \ref{interfer} shows that
normalized second-order coefficients provide high-resolution spectral
information through interferences. Section \ref{ModusSpec} also proves that
they characterize amplitude modulations of audio signals.

\subsection{Normalized Scattering Transform}
\label{sec:norm}

Scattering coefficients are renormalized to increase their invariance.
It also decorrelates these coefficients at different orders.
First-order scattering coefficients are renormalized so that they become
insensitive to multiplicative constants:
\begin{equation}
	\label{eq:normscatt01}
\tS_1 x(t,\lau) = \frac{S_1x(t,\lau)}{|x|\star\phi(t) + \epsilon}~.
\end{equation}
The constant $\epsilon$ is a silence detection threshold so that $\tS_1 x = 0$
if $x = 0$, which may be set to $0$.

The lowpass filter $\phi(t)$ can be wider than the one used in the scattering transform. Specifically, if we want to retain local amplitude information of $S_1x$ below a certain scale, we can normalize by the average of $|x|$ over this scale, creating invariance only to amplitude changes over larger intervals.

At any order $m \geq 2$, scattering coefficients are renormalized
by coefficients of the previous order:
\[
	\tS_m x(t,\lau,...,\la_{m-1},\la_m) 
= \frac{S_m x(t,\lau,...,\la_{m-1},\la_{m})}{S_{m-1} x(t,\lau,...,\la_{m-1})+\epsilon}~.
\]
A normalized scattering representation is defined by 
$\widetilde S x= (\widetilde S_m x)_{1 \leq m \leq \mm}$. We shall mostly limit ourselves
to $\mm = 2$.

For $m = 2$, 
\begin{equation}
	\tS_2 x(t,\lau,\lad) =  \frac{S_2 x(t,\lau,\lad)}{S_1 x(t,\lau)+\epsilon}~.
\end{equation}
Let us show that these coefficients are nearly invariant to a filtering by
$h(t)$ if $\widehat{h}(\om)$ is approximately
constant on the support of $\widehat{\psi}_\lau$. This condition is satisfied if
\begin{equation}
\label{ocnionsdf09sdf}
\frac {\la_1}  {Q_1} \ll \left( \int |t|\,|h(t)|\,dt\right)^{-1} ~.
\end{equation}
It implies that 
$h\star\psi_\lau(t) \approx \widehat{h}(\lau)\psi_\lau(t)$, and hence 
$|(x\star{}h)\star\psi_\lau(t)| \approx
|\widehat{h}(\lau)|\,|x\star\psi_\lau(t)|$. It results that
\begin{equation}
	S_1 (x\star{}h)(t,\lau) \approx |\widehat{h}(\lau)| S_1x(t,\lau).
\end{equation}
Similarly, 
$S_2 (x\star{}h)(t,\lau,\lad) \approx |\widehat{h}(\lau)| S_2x(t,\lau,\lad)$, so
after normalization
\begin{equation}
\tS_2 (x\star{}h)(t,\lau,\lad) \approx \tS_2 x(t,\lau,\lad)~.	
\end{equation}
Normalized second-order coefficients are thus invariant to 
filtering by $h(t)$. One can verify that this remains valid
at any order $m \geq 2$.

\subsection{Frequency Interval Measurement from Interference}
\label{interfer}
A wavelet transform has a worse frequency resolution than a windowed Fourier
transform at high frequencies. However, we show that frequency intervals
between harmonics are accurately measured by second-order scattering
coefficients.

Suppose $x$ has two frequency components in the support of
$\widehat \psi_\lau$. We then have
\[
x \star \psi_{\lau} (t) = \alpha_1 \, e^{i \xi_1 t} + \alpha_2 \, e^{i \xi_2 t},
\] 
whose modulus squared equals
\[
	|x\star\psi_\lau(t)|^2 = |\alpha_1|^2+|\alpha_2|^2+2|\alpha_1\alpha_2|\cos(\xi_1-\xi_2)t.
\]
We approximate $|x\star\psi_\lau(t)|$ with a first-order expansion of the square root, which yields
\[
	|x\star\psi_\lau(t)| \approx \sqrt{|\alpha_1|^2+|\alpha_2|^2}+\frac{|\alpha_1\alpha_2|}{\sqrt{|\alpha_1|^2+|\alpha_2|^2}}\cos(\xi_1-\xi_2)t.
\]
If $\phi$ has a support of size $T \gg |\xi_1-\xi_2|^{-1}$, then
$S_1 x(t,\lau) \approx \sqrt{|\alpha_1|^2+|\alpha_2|^2}$, so
$\tS_2 x(t,\lau,\lad) = \frac{S_2x (t,\lau,\lad)}{S_1x(t,\lau)+\epsilon}$ satisfies
\begin{equation} \label{nosdfn8sdfssd8} 
\tS_2 x(t,\lau,\lad) 
\approx |\widehat \psi_\lad (\xi_2 - \xi_1)| \,
\frac{|\alpha_1\, \alpha_2|} {|\alpha_1|^2 + |\alpha_2|^2}~. \end{equation}
These normalized second-order coefficients are thus non-negligible when $\lad$
is of the order of the frequency interval $|\xi_2-\xi_1|$. This shows that
although the first wavelet $\widehat \psi_\lau$ does not have enough
resolution to discriminate the frequencies $\xi_1$ and $\xi_2$, second-order
coefficients detect their presence and accurately measure the interval $|\xi_2
- \xi_1|$. As in audio perception, scattering coefficients can accurately
measure frequency intervals but not frequency location. The normalized
second-order scattering coefficients (\ref{nosdfn8sdfssd8}) are large only if
$\alpha_1$ and $\alpha_2$ have the same order of magnitude. This also conforms
to auditory perception where a frequency interval is perceived only when the
two frequency components have a comparable amplitude.

If $x \star \psi_{\lau}(t) = \sum_n \alpha_n \, e^{i \xi_n t}$ has more
frequency components, we verify similarly that $\tS_2 x(t,\lau,\lad)$ is
non-negligible when $\lad$ is of the order of $|\xi_n - \xi_{n'}|$ for some $n
\neq n'$. These coefficients can thus measure multiple frequency intervals
within the frequency band covered by $\widehat \psi_{\lau}$. If the frequency
resolution of $\widehat \psi_{\lad}$ is not sufficient to discriminate between
two frequency intervals $|\xi_{1} - \xi_{2}|$ and $|\xi_{3} - \xi_{4}|$, these
intervals will interfere and create high amplitude third-order scattering
coefficients. A similar calculation shows that third-order scattering
coefficients $\tS_3 x(t,\lau,\lad,\la_3)$ detect the presence of two such
intervals within the support of $\widehat \psi_{\lad}$ when $\la_3$ is close
to $||\xi_{1} - \xi_{2}| - |\xi_{3} - \xi_{4}||$. They thus measure
``intervals of intervals.''

Figure \ref{fig:chord-display}(a) shows the scalogram $\log |x\star\psi_\lau|$
of a signal $x$ containing a chord with two notes, whose fundamental
frequencies are $\xi_1 = 600~\mathrm{Hz}$ and $\xi_2 = 675~\mathrm{Hz}$,
followed by an arpeggio of the same two notes. First-order coefficients $\log
\tS_1 x(t,\lau)$ in Figure \ref{fig:chord-display}(b) are very similar for the
chord and the arpeggio because the time averaging loses time localization.
However they are easily differentiated in Figure \ref{fig:chord-display}(c),
which displays $\log \tS_2x (t,\lau,\lad)$ for $\lau \approx \xi_1 =
600~\mathrm{Hz}$, as a function of $\lad$. The chord creates large amplitude
coefficients for $\lad = |\xi_2 - \xi_1| = 75~\mathrm{Hz}$, which disappear for
the arpeggio because these two frequencies are not present simultaneously.
Second-order coefficients have also a large amplitude at low frequencies
$\la_2$. These arise from variation of the note envelopes in the chord and in
the arpeggio, as explained in the next section.

\begin{figure}[t]
\center
\setlength{\unitlength}{1in}
\begin{picture}(3.5,2.55)
\put(0.5,0.13){\includegraphics[width=2.9in]{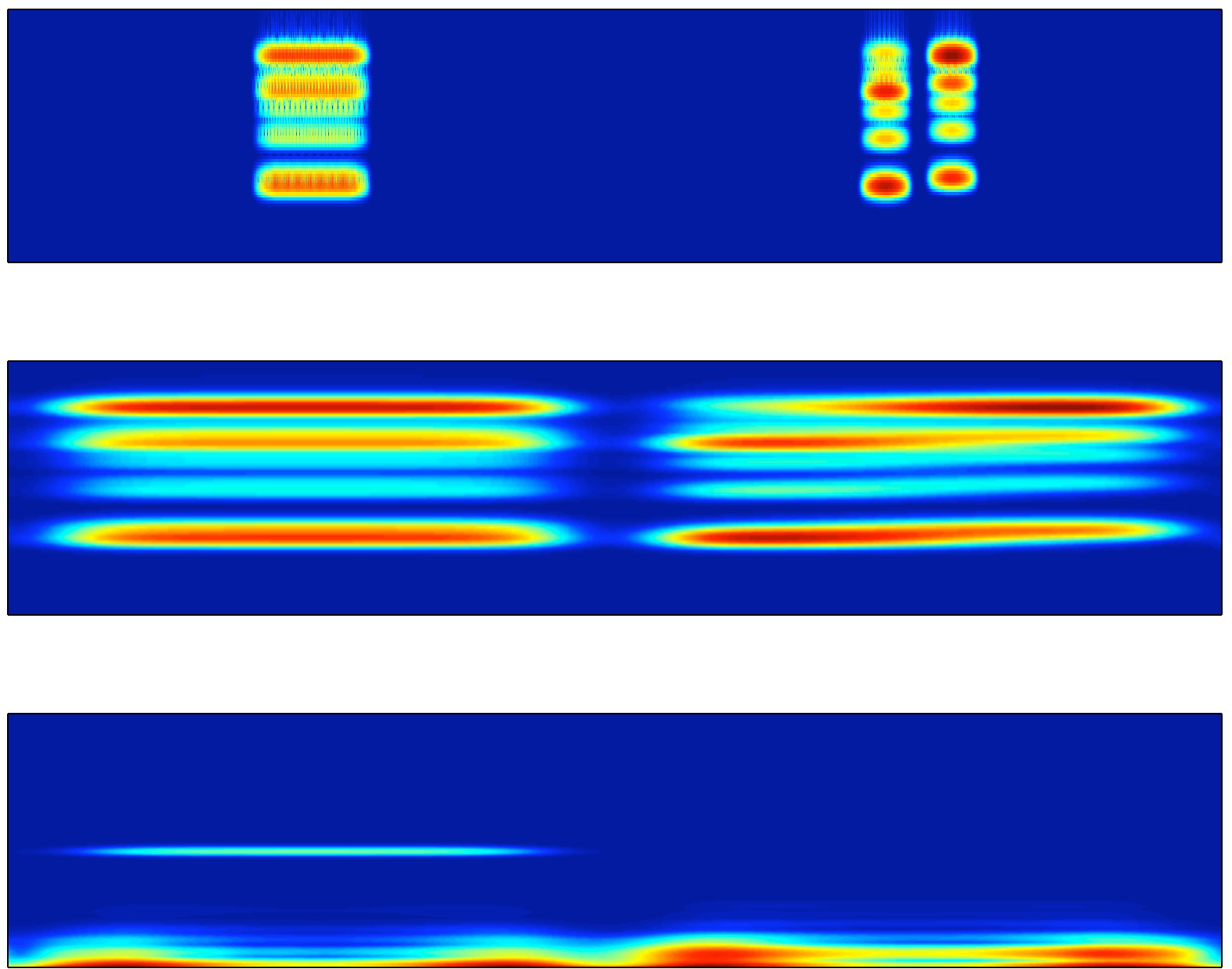}}
\put(3.42,0.10){\footnotesize $t$}
\put(3.42,0.92){\footnotesize $t$}
\put(3.42,1.75){\footnotesize $t$}
\put(0.4,0.79){\footnotesize $\log\lad$}
\put(0.4,1.62){\footnotesize $\log\lau$}
\put(0.4,2.45){\footnotesize $\log\lau$}
\put(1.83,0){\footnotesize $(c)$}
\put(1.83,0.81){\footnotesize $(b)$}
\put(1.83,1.65){\footnotesize $(a)$}
\put(0.43,1.97){\vector(+1,0){0.09}}
\put(0.30,1.95){\footnotesize $\xi_1$}
\put(0.43,0.40){\vector(+1,0){0.09}}
\put(0,0.38){\footnotesize $|\xi_2-\xi_1|$}
\end{picture}
\caption{\label{fig:chord-display}{\it (a): Scalogram $\log|x \star
\psi_{\lau}(t)|$ for a signal with two notes, of fundamental frequencies
$\xi_1 = 600~\mathrm{Hz}$ and $\xi_2 = 675~\mathrm{Hz}$, first played as a
chord and then as an arpeggio. (b): First-order 
normalized scattering coefficients $\log
\tS_1 x(t,\lau)$ for $T = 512~\mathrm{ms}$. (c): Second-order 
normalized scattering
coefficients $\log\tS_2(t,\xi_1,\lad)$ with $\lau = \xi_1$ as a function of $t$ and
$\la_2$. The chord interferences produce large coefficients for $\lad = |\xi_2
- \xi_1|$. }}
\end{figure}

\subsection{Amplitude Modulation Spectrum}
\label{ModusSpec}
Audio signals are usually modulated in amplitude by an envelope, 
whose variations may correspond to an attack or a tremolo.
For voiced and unvoiced sounds,
we show that amplitude modulations
are characterized by normalized second-order scattering coefficients.

Let $x(t)$ be a sound resulting from an
excitation $e(t)$ filtered by a resonance cavity of impulse response $h(t)$,
which is modulated in amplitude by $a(t) \geq 0$ to give
\begin{equation}
	\label{eq:source-filter-modulation}
	x(t) = a(t)\,(e\star{}h)(t)~.
\end{equation}

We shall start by taking $e$ to be a pulse train of pitch $\xi$ given by
\begin{equation}
	\label{eq:pitched-excitation}
	e(t) = \frac{2\pi}{\xi}\, \sum_n \delta \left(t - \frac{2 n \pi}\xi \right) = \sum_k e^{i k \xi t}~,
\end{equation}
representing a voiced sound. 
The impulse response $h(t)$ is typically very short compared to the minimum
variation interval $(\sup_t |a'(t)|)^{-1}$ of the modulation term and is smaller
than $2\pi/\xi$. 

We consider $\psi_{\lau}$ whose time support is short relatively
to $(\sup_t |a'(t)|)^{-1}$ and to the averaging interval $T$,
and whose frequency bandwidth is smaller than the
pitch $\xi$ and to the minimum variation interval of $\hat h$. 
These conditions are satisfied if
\begin{equation}
	\label{eq:formant-separation10sdf9d}
\left(\int |t|\,|h(t)|\,dt\right)^{-1} \gg \frac{\lau} {Q_1} \gg
\sup_t |a'(t)|~,
\end{equation}

After normalization
$\tS_1x(t,\lau) = \frac {S_1 x(t,\lau)} {|x|\star\phi(t)+\epsilon}$,
Appendix \ref{append} shows that
\begin{equation}
	\label{eq:first-order-factorization}
\tS_1x(t,\lau) 
\approx |\widehat \psi_\lau (k \xi)|\frac{|\widehat{h}(\lau)|}{\|h\|_1}
\end{equation}
where $\|h\|_1 (t) = \int |h(t)|dt$ and 
$k$ is an integer such that $|k \xi - \lau| < \xi/2$.
First-order coefficients
are thus proportional to the spectral
envelope $|\widehat{h}(\lau)|$ if $\lambda_1 \approx k \xi$ is close to a harmonic
frequency. 

Similarly, for $\tS_2x(t,\lau,\lad) = \frac{S_2 x (t,\lau,\lad)}
{S_1 x(t,\lau)+\epsilon} $, Appendix \ref{append} shows that
\begin{equation}
	\label{eq:first-order-factorization3}
\tS_2x(t,\lau,\lad) \approx \frac{|a\star\psi_\lad| \star \phi (t)}
{a\star \phi (t)}~.
\end{equation}
Second-order coefficients thus do not depend upon $h$ and $\xi$ but only
on the amplitude modulation $a(t)$ provided that $S_1 x(t,\lau)$ is non-negligible.

Figure \ref{fig:am-display}(a) displays $\log|x \star \psi_{\lau}(t)|$ for a
signal having three voiced and three unvoiced sounds. The first three are
produced by a pulse train excitation $e(t)$ with a pitch of $\xi =
600~\mathrm{Hz}$. Figure \ref{fig:am-display}(b) shows that $\log \tS_1
x(t,\lau)$ has a harmonic structure, with an amplitude depending on $\log
|\widehat h(\lau)|$. The averaging by $\phi$ and the normalization remove the
effect of the different modulation amplitudes $a(t)$ of these three voiced
sounds.

\begin{figure}[t]
\center
\setlength{\unitlength}{1in}
\begin{picture}(3.2,2.55)
\put(0.2,0.13){\includegraphics[width=2.9in]{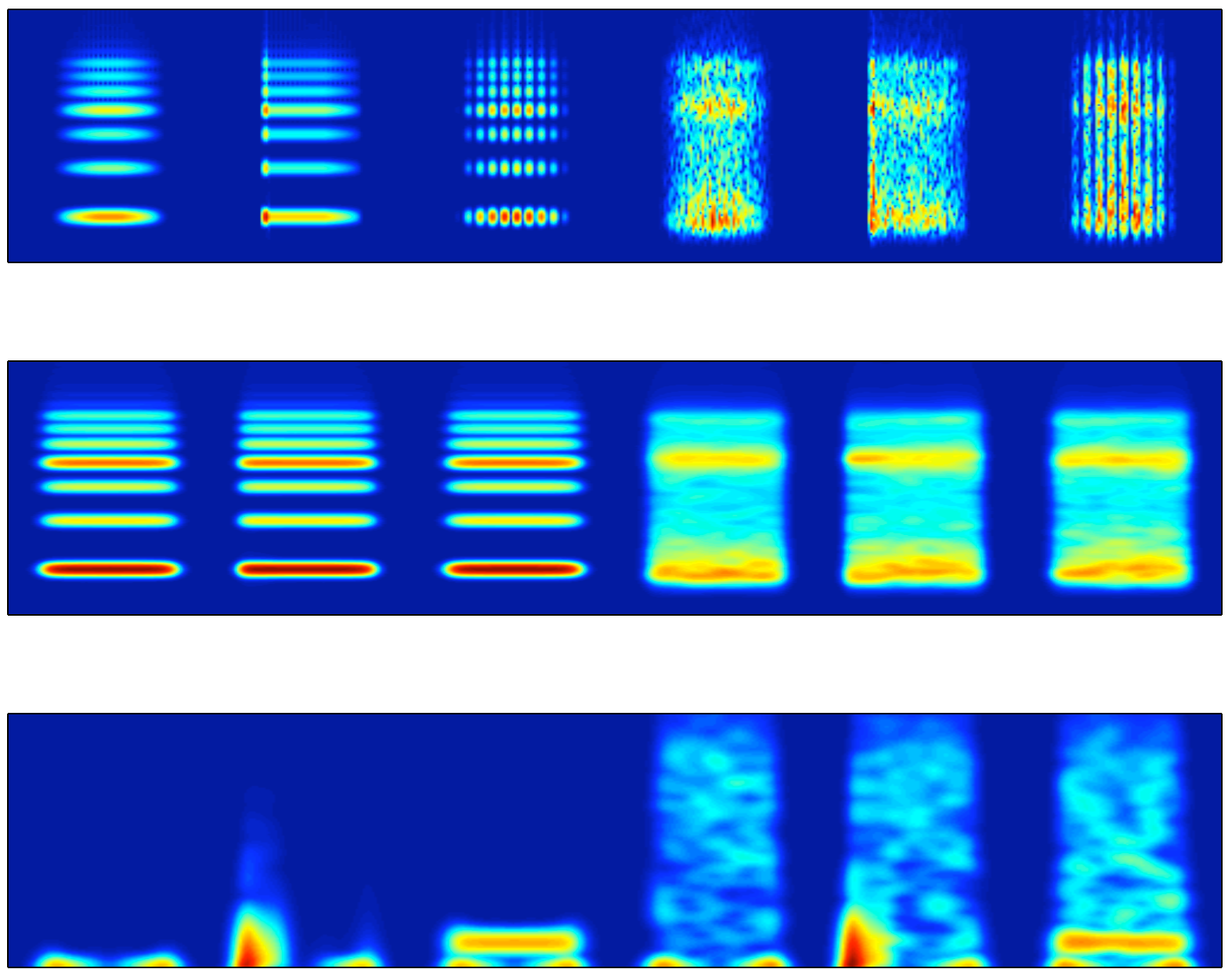}}
\put(3.12,0.10){\footnotesize $t$}
\put(3.12,0.92){\footnotesize $t$}
\put(3.12,1.75){\footnotesize $t$}
\put(0.1,0.79){\footnotesize $\log\lad$}
\put(0.1,1.62){\footnotesize $\log\lau$}
\put(0.1,2.45){\footnotesize $\log\lau$}
\put(1.57,0){\footnotesize $(c)$}
\put(1.57,0.81){\footnotesize $(b)$}
\put(1.57,1.64){\footnotesize $(a)$}
\put(0.13,2.15){\vector(+1,0){0.09}}
\put(0.0,2.14){\footnotesize $4\xi$}
\put(0.13,0.20){\vector(+1,0){0.09}}
\put(0.05,0.18){\footnotesize $\eta$}
\end{picture}

\caption{\label{fig:am-display}{
\it (a): Scalogram $\log|x \star \psi_{\lau}(t)|$ for a signal with three 
voiced sounds of same pitch $\xi = 600~\mathrm{Hz}$ and same $h(t)$
but different amplitude modulations $a(t)$: first a smooth attack,
then a sharp attack, then a tremolo of frequency $\eta$. It is  
followed by three unvoiced sounds created with the same $h(t)$ and 
same amplitude modulations $a(t)$ as the first three voiced sounds.
(b): First-order scattering $\log \tS_1 x(t,\lau)$ with $T = 128~\mathrm{ms}$. 
(c): Second-order scattering $\log \tS_2 x(t,\lau,\lad)$ displayed 
for $\lau = 4 \xi$, as a function of $t$ and $\lad$.
}}
\end{figure}

Figure \ref{fig:am-display}(c) displays $\log \tS_2(t,\lau,\lad)$
for the fourth partial $\lau = 4 \xi$, as a function of $\la_2$. The
modulation envelope $a(t)$ of the first sound has a smooth attack and thus
produces large coefficients only at low frequencies $\lad$. The envelope
$a(t)$ of the second sound has a much sharper attack and thus produces large
amplitude coefficients for higher frequencies $\lad$. The third sound is
modulated by a tremolo, which is a periodic oscillation $a(t) =
1+\epsilon\cos{(\eta t)}$. According to (\ref{eq:first-order-factorization3}),
this tremolo creates large amplitude coefficients when $\lad = \eta$, as shown
in Figure \ref{fig:am-display}(c).

Unvoiced sounds are modeled by excitations $e(t)$ which are realizations of
Gaussian white noise. The modulation amplitude is typically non-sparse, which
means the square of the average of $a(t)$ on intervals of size $T$ is of the
order of the average of $a^2(t)$. Appendix \ref{append} shows that
\begin{equation}
	\label{eq:first-order-factorization0}
	\tS_1 x(t,\lau) 
\approx \frac{\pi \|\psi\|}{2^{3/2}}\,\lau^{1/2}\,
\frac {|\widehat h (\lau)|}{\|h\|}~.
\end{equation}
Similarly to (\ref{eq:first-order-factorization}), $\tS_1 x(t,\lau)$
is proportional to $|\widehat h(\lau)|$ but does not have a harmonic structure. This is shown in Figure
\ref{fig:am-display}(b) by the last three unvoiced sounds. The fourth, fifth,
and sixth sounds have the same filter $h(t)$ and envelope $a(t)$ as the first,
second, and third sounds, respectively, but with a Gaussian white noise
excitation $e(t)$.

Similarly to (\ref{eq:first-order-factorization3}),
Appendix \ref{append} also shows that 
\[
\tS_2x(t,\lau,\lad) = 
\frac{|a\star\psi_\lad| \star \phi (t)}
{a\star \phi (t)} + \widetilde \epsilon (t)
\]
where $\widetilde \epsilon (t)$ is small relatively to the first amplitude
modulation term if $(4/\pi-1)^{1/2}(\lad Q_1)^{1/2}(\lau Q_2)^{-1/2}$ is small
relatively to this modulation term. 
For voiced and unvoiced sounds, $\tS_2x(t,\lau,\lad)$ 
mainly depends upon the
amplitude modulation $a(t)$. This is illustrated by Figure
\ref{fig:am-display}(c), which shows that the fourth, fifth, and sixth sounds
have second-order coefficients similar to those of the first, second, and
third sounds, respectively. The stochastic error term $\widetilde \epsilon$
produced by unvoiced sounds appears as random low-amplitude fluctuations in
Figure \ref{fig:am-display}(c).

\section{Frequency Transposition Invariance}
\label{sec:cls}
Audio signals within the same class may be transposed in frequency. Frequency
transposition occurs when a single word is pronounced by different speakers.
It is a complex phenomenon which affects the pitch and the spectral envelope.
The envelope is translated on a logarithmic frequency scale but also deformed.
We thus need a representation which is invariant to frequency translation on a
logarithmic scale, and which also is stable to frequency deformations. After
reviewing the mel-frequency cepstral coefficient (MFCC) approach through the
discrete cosine transform (DCT), this section defines such a representation
with a scattering transform computed along log-frequency.

MFCCs are computed from the log-mel-frequency spectrogram $\log M
x(t,\lambda)$ by calculating a DCT along the mel-frequency index $\gamma$ for
a fixed $t$ \cite{davis-mermelstein}. This $\gamma$ is linear in $\lambda$ for
low frequencies, but is proportional to $\log_2\lambda$ for higher
frequencies. For simplicity, we write $\gamma = \log_2\lambda$
and $\lambda = 2^\gamma$, although this should be modified at low frequencies.

The frequency index of the DCT is called the ``quefrency'' parameter. In
MFCCs, high-quefrency coefficients are often set to zero, which is equivalent
to averaging $\log Mx(t,2^\gamma)$ along $\gamma$ and provides some
frequency transposition invariance. The more high-quefrency coefficients are
set to zero, the bigger the averaging and hence the more transposition
invariance obtained, but at the expense of losing potentially important
information.

The loss of information due to averaging along $\gamma$ can be recovered by
computing wavelet coefficients along $\gamma$. We thus replace the DCT by a
scattering transform along $\gamma$. A frequency scattering transform is
calculated by iteratively applying wavelet transforms and modulus operators.
An analytic wavelet transform of a log-frequency dependent signal $z(\gamma)$
is defined as in (\ref{wavensdfons}), but with convolutions along the
log-frequency variable $\gamma$ instead of time:
\begin{equation}
\label{wavensdfons2}
	 W^\freq z = 
\Bigl(z\star \phi^\freq(\gamma)\,,\, z \star \psi_\quef (\gamma)
\Bigr)_{\gamma,\quef}~.
\end{equation}
Each wavelet $\psi_\quef$ is a band-pass filter whose Fourier transform
$\widehat{\psi}_\quef$ is centered at ``quefrency'' $\quef$ and $\phi^\freq$
is an averaging filter. These wavelets satisfy the condition
(\ref{eq:paley-littlewood2}), so $W^\freq$ is contractive and invertible.

Although the scattering transform along $\gamma$
can be computed at any order,
we restrict ourself to zero and
first-order scattering coefficients, because it 
seems to be sufficient for classification. 
A first-order scattering transform of $z(\gamma)$ is calculated from 
\begin{equation}
\label{Ufreq}
\Ufreq z = 
\Big(z (\gamma)\,,\,|z \star \psi_\quefu (\gamma)|\Big)\,,
\end{equation}
by averaging these coefficients along $\gamma$ with $\phi^\freq$:
\begin{equation}
\label{Sfreq}
\Sfreq z = 
\Big(z \star \phi^\freq(\gamma)\,,\,|z \star \psi_\quefu | \star \phi^\freq(\gamma) \Big)\,.
\end{equation}
These coefficients are locally invariant to log-frequency shifts, over a
domain proportional to the support of the averaging filter $\phi^\freq$. This
frequency scattering is formally identical to a time scattering transform. It
has the same properties if we replace the time $t$ by the log-frequency
variable $\gamma$. Numerical experiments are implemented using Morlet wavelets
$\psi_{q_1}$ with $Q_1 = 1$.

Similarly to MFCCs, we apply a logarithm to normalized scattering coefficients
so that multiplicative components become additive and can be separated by
linear operators. This was shown to improve classification performance. The
logarithm of a second-order normalized time scattering, at a frequency $\lau =
2^\gamma$ and a time $t$ is
\begin{equation}
\label{eq:log-scatt-j}
\log \tS x(t,\gamma) =
\left(
\begin{array}{l}
\log \tS_1 x(t,2^\gamma)\\
\log \tS_2 x(t,2^\gamma,\lad)
\end{array}
\right)_{\lad}
\end{equation}
This is a vector of signals $z(\gamma)$, where $z$ depends on $t$ and $\la_2$.
Let us transform each $z(\gamma)$ by the frequency scattering operators
$\Ufreq$ or $\Sfreq$, defined in (\ref{Ufreq}) and (\ref{Sfreq}). Let $\Ufreq
\log \tS x(t,\gamma)$ and $\Sfreq \log \tS x(t,\gamma)$ stand for the
concatenation of these transformed signals for all $t$ and $\la_2$. The
representation $\Sfreq \log \tS x$ is calculated by cascading a scattering in
time and a scattering in log-frequency. It is thus locally translation
invariant in time and in log-frequency, and stable to time and frequency
deformations. The interval of time-shift invariance is defined by the size of
the time averaging window $\phi$, whereas its frequency-transposition
invariance depends upon the width of the log-frequency averaging window
$\phi^\freq$.

Frequency transposition invariance is useful for certain tasks, such as speaker-independent speech
recognition or transposition-independent melody recognition, but it
removes information important to other tasks, such as speaker identification.
The frequency transposition invariance, implemented by the frequency averaging
filter $\phi^\freq$, should thus be adapted to the classification task. Next
section explains that this can be done by replacing $\Sfreq \log \tS
x(t,\gamma)$ by $\Ufreq \log \tS x(t,\gamma)$ and optimizing the linear
averaging at the supervised classification stage.

\section{Classification}
\label{sec:classif}

This section compares the classification performance of support vector machine
classifiers applied to scattering representations with standard low-level
features such as $\Delta$-MFCCs or more sophisticated state-of-the-art
representations. Section \ref{sec:arch} explains how to automatically adapt
invariance parameters, while Sections \ref{sec:gtzan} and \ref{sec:phone}
present results for musical genre classification and phone classification,
respectively.

\subsection{Adapting Frequency Transposition Invariance}
\label{sec:arch}

\begin{figure}
\centering
\includegraphics[width=2.5in]{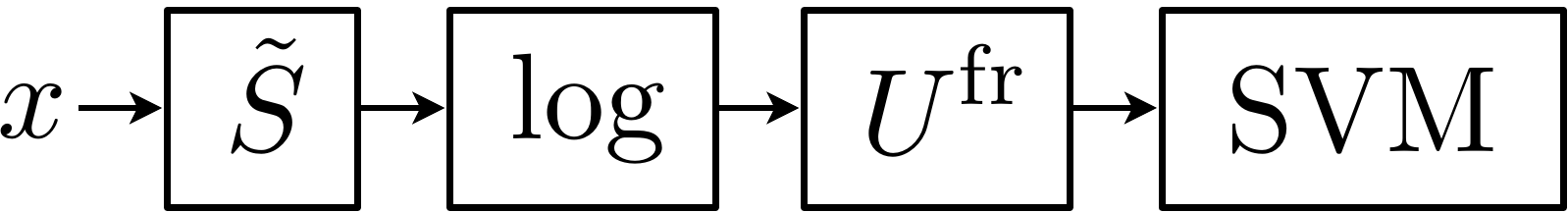} \caption{\label{fig:arch}{\it A time and frequency scattering representation is computed by applying a
normalized temporal scattering $\tS$ on the input signal $x(t)$, a logarithm,
and a scattering along log-frequency without averaging.}}
\end{figure}

The amount of frequency-transposition invariance depends on the
classification problem, and may vary for each signal class. This adaptation is
implemented by a supervised classifier, applied to the time and frequency
scattering transform.

Figure \ref{fig:arch} illustrates the computation of a time and frequency
scattering representation. The normalized scattering transform $\tS x$ of an
input signal $x$ is computed along time, over half-overlapping windows of size
$T$. The log-scattering vector for each time window is transformed along
frequencies by the wavelet modulus operator $\Ufreq$, as explained in Section
\ref{sec:cls}. Since we do not know in advance how much transposition
invariance is needed for a particular classification task, the final frequency
averaging is adaptively computed by the supervised classifier, which takes as
input the vector of coefficients $\{ \Ufreq \log \tS x(t,\gamma) \}_\gamma$,
for each time frame indexed by $t$.

The supervised classification is implemented by a support vector machine
(SVM). A binary SVM classifies a feature vector by calculating its position
relative to a hyperplane, which is optimized to maximize class separation
given a set of training samples. It thus computes the sign of an optimized
linear combination of the feature vector coefficients. With a Gaussian kernel
of variance $\sigma^2$, the SVM computes different hyperplanes in different
balls of radius $\sigma$ in the feature space. The coefficients of the linear
combination thus vary smoothly with the feature vector values. Applied to $\{
\Ufreq \log \tS x(t,\gamma) \}_\gamma$, the SVM optimizes the linear
combination of coefficients along $\gamma$, and can thus adjust the amount of
linear averaging to create frequency-transposition invariant descriptors which
maximize class separation. A multi-class SVM is computed from binary
classifiers using a one-versus-one approach. All numerical experiments use the
LIBSVM library \cite{libsvm}.

The wavelet octave resolution $Q_1$ can also be adjusted at the supervised
classification stage, by computing the time scattering for several values of
$Q_1$ and concatenating all coefficients in a single feature vector. A filter
bank with $Q_1 = 8$ has enough frequency resolution to separate harmonic
structures, whereas wavelets with $Q_1 = 1$ have a smaller time support and
can thus better localize transient in time. The linear combination optimized
by the SVM is a feature selection algorithm, which can select the best
coefficients to discriminate any two classes. In the experiments described
below, adding more values of $Q_1$ between $1$ and $8$ provides marginal
improvements.

\subsection{Musical Genre Classification}
\label{sec:gtzan}

Scattering feature vectors are first applied to musical genre classification
problem on the GTZAN dataset \cite{gtzan}. The dataset consists of $1000$
thirty-second clips, divided into $10$ genres of $100$ clips each.
Given a clip, the goal is to find its genre.

Preliminary experiments have demonstrated the efficiency of the scattering
transform for music classification \cite{ismir} and for environmental sounds
\cite{bauge2013}. These results are improved by letting the supervised
classifier adjust the transform parameters to the signal classes. A set of
feature vectors is computed over half-overlapping frames of duration $T$. Each
frame of a clip is classified separately by a Gaussian kernel SVM, and the
clip is assigned to the class which is most often selected by its frames. To
reduce the SVM training time, feature vectors were only computed every
$370~\mathrm{ms}$ for the training set. The SVM slack parameter and the
Gaussian kernel variance are determined through cross-validation on the
training data. Table \ref{table:results} summarizes results with one run of
ten-fold cross-validation. It gives the average error and its standard
deviation.

Scattering classification results are first compared with results obtained
with MFCC feature vectors. A $\Delta$-MFCC vector represents an audio frame of
duration $T$ at time $t$ by three MFCC vectors centered at $t-T/2$, $t$ and
$t+T/2$. When computed for $T=23~\mathrm{ms}$, the $\Delta$-MFCC error is
$20.2\%$, which is reduced to $18.0\%$ by increasing $T$ to $740~\mathrm{ms}$.
Further increasing $T$ does not reduce the error. State-of-the-art algorithms
provide refined feature vectors to improve classification. Combining MFCCs
with stabilized modulation spectra and performing linear discriminant
analysis, \cite{lee-shih} obtains an error of $9.4\%$, the best non-scattering
result so far. A deep belief network trained on spectrograms
\cite{hamel2010learning}, achieves $15.7\%$ error with an SVM classifier. A
sparse representation on a constant-Q transform \cite{henaff2011unsupervised},
gives $16.6\%$ error with an SVM.

\begin{table}
	\begin{center}
	\begin{tabular}{|l|c|c|}
		\hline
		Representations & GTZAN & TIMIT \\
		\hline
		$\Delta$-MFCC (T = $23~\mathrm{ms}$) & 20.2 $\pm$ 5.4 & 18.5 \\
		$\Delta$-MFCC (T = $740~\mathrm{ms}$) & 18.0 $\pm$ 4.2 & 60.5 \\
		State of the art (excluding scattering) & 9.4 $\pm$ 3.1 \cite{lee-shih} & 16.7 \cite{chang-glass} \\
		\hline
		& $T = 740~\mathrm{ms}$ & $T = 32~\mathrm{ms}$ \\
		\hline
		Time Scat., $\mm=1$ & 19.1 $\pm$ 4.5 & 19.0 \\
		Time Scat., $\mm=2$ & 10.7 $\pm$ 3.1 & 17.3 \\
		Time Scat., $\mm=3$ & 10.6 $\pm$ 2.5 & 18.1 \\
		Time \& Freq. Scat., $\mm=2$ & 9.3 $\pm$ 2.4 & 16.6 \\
		Adapt $Q_1$, Time \& Freq. Scat., $\mm=2$ & 8.6 $\pm$ 2.2 & 15.9 \\
		\hline
	\end{tabular}
	\end{center}
	\caption{Error rates (in percent) for musical genre classification on GTZAN and for phone classification on the TIMIT database for different features. Time scattering transforms are computed with $T =
740~\mathrm{ms}$ for GTZAN and with $T = 32~\mathrm{ms}$ for TIMIT.}
	\label{table:results}
\end{table}

Table \ref{table:results} gives classification errors for different scattering
feature vectors. For $\mm = 1$, they are composed of first-order time
scattering coefficients computed with $Q_1 =
8$ and $T = 740~\mathrm{ms}$. These vectors are similar to MFCCs as shown by
(\ref{Melwaves}). As a result, the classification error of $19.1\%$ is close
to that of MFCCs for the same $T$. For $\mm = 2$, we add second-order
coefficients computed with $Q_2 = 2$. It reduces the
error to $10.7 \%$. This $40\%$ error reduction shows the importance of
second-order coefficients for relatively large $T$. Third-order coefficients
are also computed with $Q_3 = 1$. For $\mm = 3$,
including these coefficients reduces the error marginally to $10.6\%$, at a
significant computational and memory cost. We therefore restrict ourselves to $\mm
= 2$.

Musical genre recognition is a task which is partly invariant to frequency
transposition. Incorporating a scattering along the log-frequency variable,
for frequency transposition invariance, reduces the error by about $15\%$. These errors are obtained with a first-order scattering along log-frequency. Adding second-order coefficients only improves results marginally.

Providing adaptivity for the wavelet octave bandwidth $Q_1$ by computing
scattering coefficients for both $Q_1 = 1$ and $Q_1 = 8$ further reduces the
error by almost $10\%$. Indeed, music signals include both sharp transients and
narrow-bandwidth frequency components. We thus have an error rate of $8.6\%$, which compares favorably to the non-scattering state-of-the-art of $9.4\%$ error \cite{lee-shih}.

Replacing the SVM with more sophisticated classifiers can improve results.
A sparse representation classifier applied to
second-order time scattering coefficients reduces 
the error rate from $10.7\%$ to $8.8\%$, as shown in \cite{chen2013}.
Let us mention that the GTZAN database suffers from some significant 
statistical issues \cite{sturm2012analysis}, which probably does not make
it appropriate to evaluate further algorithmic refinements. 

\subsection{Phone Segment Classification}
\label{sec:phone}

The same scattering representation is tested for phone segment classification
with the TIMIT corpus \cite{timit}. The dataset contains $6300$ phrases, each
annotated with the identities, locations, and durations of its constituent
phones. This task is simpler than continuous speech recognition, but 
provides an evaluation of scattering feature vectors for representing phone
segments. Given the location and duration of a phone segment, the goal is to
determine its class according to the standard protocol
\cite{kflee,clarkson-moreno}. The $61$ phone classes (excluding the glottal
stop /q/) are collapsed into $48$ classes, which are used to train and test
models. To calculate the error rate, these classes are then mapped into $39$
clusters. Training is achieved on the full $3696$-phrase training set,
excluding ``SA'' sentences. The Gaussian kernel SVM parameters are optimized
by validation on the standard $400$-phrase development set
\cite{halberstadt-thesis}. The error is then calculated on the core
$192$-phrase test set.

An audio segment of length $192~\mathrm{ms}$ centered on a phone can be
represented as an array of MFCC feature vectors with half-overlapping time
windows of duration $T$. This array, with the logarithm of the phone duration
added, is fed to the SVM. In many cases, hidden Markov models or fixed time
dilations are applied to match different MFCC sequences, to account for the
time-warping of the phone segment \cite{kflee,clarkson-moreno}. Table
\ref{table:results} shows that $T = 23~\mathrm{ms}$ yields a $18.5\%$ error
which is much less than the $60.5\%$ error for $T = 740~\mathrm{ms}$. Indeed,
many phones have a short duration with highly transient structures and are not
well-represented by wide time windows.

A lower error of $17.1\%$ is obtained by replacing the SVM with a sparse
representation classifier on MFCC-like spectral features \cite{sainath}.
Combining MFCCs of different window sizes and using a committee-based
hierarchical discriminative classifier, \cite{chang-glass} achieves an error of $16.7\%$, the best so far. Finally, convolutional deep-belief networks
cascades convolutions, similarly to scattering, on a spectrogram using filters
learned from the training data. These, combined with MFCCs, yield an error of
$19.7\%$ \cite{lee}.

Rows $4$ through $6$ of Table \ref{table:results} gives the classification
results obtained by replacing MFCC vectors with a time scattering transform
computed using first-order wavelets with $Q_1 = 8$. In order to retain local
amplitude structure while creating invariance to loudness changes, first-order
coefficients are renormalized in \eqref{eq:normscatt01} using $|x|$
averaged over a window the size of the whole phone segment. Second- and
third-order scattering coefficients are calculated with $Q_2 = Q_3 = 1$. The
best results are obtained with $T = 32~\mathrm{ms}$. For $\mm = 1$, we only
keep first-order scattering coefficients and get a $19.0\%$ error, similar to
that of MFCCs. The error is reduced by about $10\%$ with $\mm = 2$, a smaller
improvement than for GTZAN because scattering invariants are computed on
smaller time interval $T = 32~\mathrm{ms}$ as opposed to $740~\mathrm{ms}$ for
music. Second-order coefficients carry less energy when $T$ is smaller, as
shown in Table \ref{table:scattering-energy}. For the same reason, third-order
coefficients provide even less information compared to the GTZAN case, and do
not improve results.

Note that no explicit time warping is needed in this model. Thanks to the scattering deformation stability, supervised linear classifiers can indeed compute time-warping invariants which remain sufficiently informative.

For $\mm = 2$, cascading a log-frequency transposition invariance computed
with a first-order frequency scattering transform of Section \ref{sec:cls}
reduces the error by about $5\%$. Computing a second-order frequency
scattering transform only marginally improves results. Allowing to adapt the
wavelet frequency resolution by computing scattering coefficients with $Q_1 =
1$ and $Q_1 = 8$ also reduces the error by about $5\%$

Again, these results are for the problem of phone classification, where boundaries are given. Future work will concentrate on the task of phone recognition, where such information is absent. Since this task is more complex, performance is generally obtained worse, with the state-of-the-art achieved with a $17.7\%$ error rate \cite{graves2013speech}.

\section{Conclusion}

The success of MFCCs for audio classification can partially be explained by
their stability to time-warping deformation. Scattering representations extend
MFCCs by recovering lost high frequencies through successive wavelet
convolutions. Over windows of $T \approx 200~\mathrm{ms}$, signals recovered
from first- and second-order scattering coefficients have a good audio
quality. Normalized scattering coefficients characterizes amplitude
modulations, and are stable to time-warping deformations. A frequency
transposition invariant representation is obtained by cascading a second
scattering transform along frequencies. Time and frequency scattering feature
vectors yield state-of-the-art classification results with a Gaussian kernel
SVM, for musical genre classification on GTZAN, and phone segment
classification on TIMIT.

\appendix

\section{Modulation Spectrum Properties}
\label{append}

Following \eqref{eq:formant-separation10sdf9d}, $a(t)$ is nearly constant over the time support of $\psi_{\lau}$ and
$\widehat h (\om)$ is nearly constant over the frequency support of $\widehat
\psi_{\lau}$. It results that
\begin{equation}
	\label{eq:formant-separation10}
|x \star \psi_\lau (t) |\approx |\widehat h(\lau)| \, 
|e \star \psi_\lau (t)|\, a(t)~.
\end{equation}

Let $e(t)$ be a harmonic excitation. Since we supposed that $\lau/Q_1 \leq \xi$, 
$\widehat \psi_{\lau}$
covers at most one harmonic whose frequency $k \xi$ is close to $\lau$.
It then results from (\ref{eq:formant-separation10}) that
\begin{equation}
	\label{eq:formant-separation}
|x\star\psi_{\lau}(t)|
\approx |\widehat{h}(\lau)|\, |\widehat \psi_\lau (k \xi)|\, a(t)~.
\end{equation}
Computing $S_1x(t,\lau) = |x\star\psi_{\lau}| \star \phi(t)$ gives
\begin{equation}
	\label{eq:first-order-factorization-app}
	S_1x(t,\lau) \approx |\widehat{h}(\lau)|\,
 |\widehat \psi_\lau (k \xi)|\, a\star\phi(t)~.
\end{equation}
Let us compute
\begin{eqnarray*}
	|x|\star\phi(t) &=& \int |e\star h(u)|a(u)\phi(t-u)du\\
	 &=& \frac{2\pi}{\xi} \int 
	\sum_{n=-\infty}^{+\infty} \!\!|h(u+2n\pi/{\xi})| a(u)\phi(t-u)du\\
	 &=& \frac{2\pi}{\xi} \sum_{k=-\infty}^{\infty}\int_0^{2\pi/\xi} 
	\sum_{n=-\infty}^{+\infty} |h(u+{2n\pi}/{\xi})| \\ 
&&a(u+{2k\pi}/{\xi})\phi(t-u-{2k\pi}/{\xi})\,du.
\end{eqnarray*}
Since $a(t)$ and $\phi(t)$ are approximately constant over intervals of size $2 \pi/\xi$, and the support of $h$ is smaller than $2 \pi/\xi$, one can verify that
\begin{equation*}
	|x|\star\phi(t) \approx \|h\|_1\, a \star \phi(t)~.
\end{equation*}
This approximation together with \eqref{eq:first-order-factorization-app} verifies \eqref{eq:first-order-factorization}.

It also results from \eqref{eq:formant-separation} that
\begin{equation*}
S_2 x (t,\lau,\lad) \approx |\widehat{h}(\lau)|\,|\widehat\psi_{\lau}(k \xi)|\,	|a\star\psi_\lad| \star \phi (t) \,,
\end{equation*}
which, combined with \eqref{eq:first-order-factorization-app}, yields \eqref{eq:first-order-factorization3}.

Let us now consider a Gaussian white noise excitation $e(t)$. We saw in (\ref{eq:formant-separation10}) that
\begin{equation}
\label{eq:formant-separation102}
|x \star \psi_\lau (t) |\approx |\widehat h(\lau)| \, 
|e \star \psi_\lau (t)|\, a(t)~.
\end{equation}
Let us decompose
\begin{equation}
\label{asfdns}
|e \star \psi_\lau (t)| = \mathbb{E}(|e\star\psi_{\lau}|) + \epsilon(t)~,
\end{equation}
where $\epsilon (t)$ is a zero-mean stationary process. If $e(t)$ is a
normalized Gaussian white noise then $e \star \psi_{\lau}(t)$ is a Gaussian random
variable of variance $\|\psi_{\lau}\|^2$. It results
that $|e \star \psi_{\lau}(t)|$ and $\epsilon (t)$ have a Rayleigh
distribution, and since $\psi$ is a complex wavelet with quadrature phase, 
one can verify that
\[
\mathbb{E}(|e \star \psi_{\lau}|)^2 = \frac \pi  4\, 
\mathbb{E}(|e \star \psi_{\lau}|^2) = \frac \pi  4  \,\|\psi_\lau\|^2 ~.
\]
Inserting (\ref{asfdns}) and this equation in (\ref{eq:formant-separation102})
shows that
\begin{equation}
	\label{eq:first-order-separation}
|x \star \psi_{\lau}(t)| \approx |\widehat h(\lau)| 
\Bigl(\pi^{1/2} 2^{-1} \|\psi_\lau\| a(t) + a(t)\,\epsilon (t) \Bigr)~.
\end{equation}
When averaging with $\phi$, we get
\begin{equation}
S_1 x(t,\lau) \approx |\widehat h(\lau)|\, 
\Bigl(\pi^{1/2} 2^{-1} \|\psi_\lau\| a \star \phi(t) + (a\,\epsilon) \star
\phi (t) \Bigr)~.
\end{equation}
Suppose that $a(t)$ is not
sparse, in the sense that 
\begin{equation}
	\label{eq:non-sparsity}
	\frac{|a|^2\star\phi(t)}{|a\star\phi|^2(t)} \sim 1~.
\end{equation}
It means that ratios between local $\mathbf{L}^2$ and $\mathbf{L}^1$ norms of
$a$ is of the order of $1$. We are going to show that if $T^{-1} \ll \lau
Q^{-1}_1$ then
\begin{equation}
	\label{eq:aephi-error}
	\frac{\mathbb{E}(|(a\,\epsilon)\star\phi(t)|^2)}{\|\psi_\lau\|^2\, |a \star \phi(t)|^2} \ll 1
\end{equation}
which implies
\begin{equation}
	\label{eq:noise-first-order-approx}
	 S_1 x(t,\lau) \approx \frac{\pi^{1/2}}{2} \|\psi\|\,\lau^{1/2}\,|\widehat h (\lau)|\, a\star\phi(t)~.
\end{equation}
We give the main arguments to compute the order of magnitudes of the
stochastic terms, but it is not a rigorous proof. For a detailed argument, see \cite{anden-thesis}.
Computations rely on the following lemma.
\begin{lemma}
\label{lamsdf}
Let $z(t)$ be a zero-mean stationary process of power spectrum
$\widehat R_z (\omega)$. For any deterministic functions $a(t)$ and $h(t)$
	\begin{equation}
		\mathbb{E}(|(z a) \star h(t)|^2) \leq \sup_\omega \widehat R_z(\omega)\, |a|^2 \star |h|^2 (t)~.
	\end{equation}
\end{lemma}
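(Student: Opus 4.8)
The plan is to expand the second moment into a double integral, use wide-sense stationarity to bring in the autocorrelation of $z$, and then diagonalize the resulting Hermitian quadratic form with Plancherel's theorem, at which point the nonnegativity of the power spectrum yields the bound.

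First I would write $(za)\star h(t) = \int z(u)\,a(u)\,h(t-u)\,du$ and compute the second moment by Fubini's theorem and linearity of expectation, which produces a double integral of $\E\bigl(z(u)\,z^\star(v)\bigr) = R_z(u-v)$ against $a(u)\,a^\star(v)\,h(t-u)\,h^\star(t-v)$; here I use that $z$ is zero-mean and stationary, so its covariance depends only on the lag $u-v$ and its Fourier transform is by definition the power spectrum $\widehat R_z$. Fixing $t$ and setting $g_t(u) = a(u)\,h(t-u)$, this double integral is the Hermitian quadratic form $\int\!\!\int R_z(u-v)\,g_t(u)\,g_t^\star(v)\,du\,dv$.

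Next, substituting the Fourier representation $R_z(\tau) = (2\pi)^{-1}\int \widehat R_z(\om)\,e^{i\om\tau}\,d\om$ and integrating out $u$ and $v$ turns this quadratic form into $(2\pi)^{-1}\int \widehat R_z(\om)\,|\widehat g_t(\om)|^2\,d\om$ (only $|\widehat g_t|$ appears, so the sign in its argument is immaterial). Since $R_z$ is a positive-definite function, Bochner's theorem gives $\widehat R_z \ge 0$; bounding $\widehat R_z(\om) \le \sup_\om \widehat R_z(\om)$ under the integral and applying Plancherel's formula yields
\[
\E\bigl(|(za)\star h(t)|^2\bigr) \le \sup_\om \widehat R_z(\om)\int |g_t(u)|^2\,du~.
\]
To finish, $\int |g_t(u)|^2\,du = \int |a(u)|^2\,|h(t-u)|^2\,du = |a|^2 \star |h|^2(t)$, which is exactly the claimed inequality.

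The only delicate points I anticipate are the nonnegativity of the power spectrum --- equivalently, that the kernel $R_z(u-v)$ defines a nonnegative quadratic form --- and enough integrability on $a$, $h$, and $R_z$ to justify the interchange of expectation, integration, and Fourier inversion; these are mild regularity hypotheses satisfied in the situations where the lemma is applied, and everything else is a routine Plancherel computation.
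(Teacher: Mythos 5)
Your proposal is correct and follows essentially the same route as the paper: both expand $\mathbb{E}(|(za)\star h(t)|^2)$ into the quadratic form $\iint R_z(u-v)\,y_t(u)\,y_t^*(v)\,du\,dv$ with $y_t(u)=a(u)h(t-u)$ and bound it by $\sup_\omega \widehat R_z(\omega)\,\|y_t\|^2 = \sup_\omega \widehat R_z(\omega)\,|a|^2\star|h|^2(t)$. The only difference is presentational: the paper invokes this bound as the spectral bound of the positive symmetric operator with kernel $R_z$, while you prove that same bound explicitly by Fourier diagonalization and Plancherel.
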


\begin{proof}
Let $R_z(\tau) = \E(z(t)\, z(t+\tau))$,
\[
\mathbb{E}(|(z a) \star h(t)|^2) =
\iint R_z (v-u) \, a(u) \,h(t-u)\, a(v)^*\,h(t-v)^*\,  du dv 
\]
and hence
\[
\mathbb{E}(|(z a) \star h(t)|^2) = 
\langle R_z y_t , y_t \rangle~\mbox{with}~y_t (u) = a(u) h(t-u). 
\]
Since $R_z$ is the kernel of a positive symmetric operator whose
spectrum is bounded by $\sup_\omega \widehat R_z(\omega)$ it results that
\[
\mathbb{E}(|(z a) \star h(t)|^2) \leq 
\sup_\omega \widehat R_z(\omega)\, \|y_t\|^2 = 
\sup_\omega \widehat R_z(\omega)\, |a|^2 \star |h|^2 (t)~.
\]
\end{proof}

Because $e(t)$ is a normalized white noise, with a Gaussian
chaos expansion, one can verify \cite{anden-thesis} that $\sup_\omega \widehat
R_{\epsilon} (\omega) \leq C(1 - \pi / 4)$, where $C = \|\psi\|_1^2 \approx 1$. Applying Lemma \ref{lamsdf} to $z = \epsilon$ and $h = \phi$ gives
\[
\E(|(\epsilon\,a)\star\phi(t)|^2) \leq (1 - \pi/4)
\,|a|^2 \star |\phi|^2 (t)~.
\]
Since $\phi$ has a duration $T$, it can be written as $\phi(t) = T^{-1}\phi_0 (T^{-1}t)$ for some $\phi_0$ of duration $1$. As a result, if \eqref{eq:non-sparsity} holds then
\begin{equation}
	\label{eq:variance-ratio}
	\frac{|a|^2 \star |\phi|^2 (t)} 
	{|a \star \phi (t)|^2} \sim \frac{1}{T}
\end{equation}
The frequency support of $\psi_\lau$ is proportional to $\lau Q_1^{-1}$, so we have $\|\psi_\lau\|^2 \sim \lau Q_1^{-1}$. Together with \eqref{eq:variance-ratio}, if $T^{-1} \ll \lau  Q^{-1}_1$ it proves \eqref{eq:aephi-error} which yields \eqref{eq:noise-first-order-approx}.

We approximate $|x|\star\phi(t)$ similarly. First, we write
\begin{equation}
	|e\star h(t)| = \E|e\star h|+\epsilon'(t),
\end{equation}
where $\epsilon'(t)$ is a zero-mean stationary process. Since $e\star h(t)$ is normally distributed in $\mathbb{R}$, $|e\star h|(t)$ has $\chi^1$ distribution and
\begin{equation}
	\E(|e\star h|)^2 = \frac{2}{\pi}\E(|e\star h|^2) = \frac{2}{\pi}\|h\|^2,
\end{equation}
which then gives
\begin{equation}
	|x|\star\phi(t) = \sqrt{\frac{2}{\pi}}\|h\|a\star\phi(t) + (a\epsilon')\star\phi(t).
\end{equation}

One can show that $\mathrm{sup}_\om \widehat{R}_{\epsilon'}(\om) \le (1-2/\pi)\|h\|_1^2$ \cite{anden-thesis}, so applying Lemma \ref{lamsdf} gives
\begin{equation}
	\E \left( \left| (a\epsilon')\star\phi(t) \right|^2 \right) \le (1-2/\pi)\|h\|_1^2 |a|^2\star|\phi|^2(t).
\end{equation}
Now \eqref{eq:variance-ratio} implies that
\begin{equation}
	\frac{\E \left( \left| (a\epsilon')\star\phi(t) \right|^2 \right)} {\|h\|^2|a\star\phi(t)|^2} \ll 1
\end{equation}
since $a$ is non-sparse and because $h$ has a support much smaller than $T$ so 
$\|h\|_1^2/\|h\|^2 \ll T$. Consequently,
\begin{equation}
	|x|\star\phi(t) \approx \sqrt{\frac{2}{\pi}}\|h\|a\star\phi(t),
\end{equation}
which, together with \eqref{eq:noise-first-order-approx} gives \eqref{eq:first-order-factorization0}.

Let us now compute $S_2 x(t,\lau,\lad) = ||x \star \psi_\lau| \star \psi_\lad|
\star \phi(t)$. If $T^{-1} \ll \lau Q^{-1}_1$ then
\eqref{eq:noise-first-order-approx} together with
\eqref{eq:first-order-separation} shows that
\begin{equation}
	\label{eq:second-order-factorization0}
\frac{S_2 x(t,\lau,\lad)} {S_1 x(t,\lau)} \approx 
\frac{|a\star\psi_\lad| \star \phi (t)}
{a\star \phi (t)} + \widetilde \epsilon(t) \,,
\end{equation}
where
\begin{equation}
	\label{eq:second-order-factorization0sdf}
0 \leq \widetilde\epsilon(t) \leq \frac{2|(a\epsilon)\star\psi_\lad|\star \phi (t)}{\pi^{1/2} \|\psi_\lau\| \,a \star \phi(t)} ~.
\end{equation}
Observe that
\begin{equation*}
E(|(a\epsilon)\star\psi_\lad|\star \phi (t)) \le
E(|(a\epsilon)\star\psi_\lad|^2)^{1/2}\star\phi(t) .
\end{equation*}
Lemma \ref{lamsdf} applied to $z = \epsilon$ and
$h = \psi_\lad$ gives the following upper bound:
\begin{equation}
	\label{eq:second-order-factorization0sdf9}
\E(|(a \epsilon) \star \psi_{\lad}  (t)|^2) \leq C(1-\pi/4)\, 
|a|^2 \star |\psi_{\lad}|^2(t)~.
\end{equation}
One can write $|\psi_{\lad}(t)|= \la_2 Q_2^{-1} \theta( \la_2 Q_2^{-1} t)$
where $\theta(t)$ satisfies $\int \theta(t)\,dt \sim 1$. 
Similarly to \eqref{eq:variance-ratio}, if \eqref{eq:non-sparsity} holds over time intervals of size $Q_2/\lad$, then
\begin{equation}
	\label{eq:variance-ratio2}
\frac{|a|^2 \star |\psi_{\lad}|^2(t)}{|a \star |\psi_\lad||^2} \sim \frac{\lad}{Q_2}.
\end{equation}
Since $\|\psi_{\lau} \|^2 \sim \lau Q_1^{-1}$ and $|\psi_\lad|\star\phi(t)
\sim \phi(t)$ when $Q_2/\lad \le T$, it results from
(\ref{eq:second-order-factorization0sdf},\ref{eq:second-order-factorization0sdf9},\ref{eq:variance-ratio2})
that $0 \leq \mathbb{E}(\widetilde\epsilon(t)) \leq C\,(4/ \pi - 1)^{1/2}
(\lad\, Q_1)^{1/2}(\lau\, Q_2)^{-1/2}$ with $C \sim 1$.



\ifCLASSOPTIONcaptionsoff
  \newpage
\fi

\bibliographystyle{IEEEtran}
\bibliography{IEEEabrv,paper}
%



%


\end{document}